\theoremstyle{definition}
\newtheorem{lemma}{Lemma}
\newtheorem{remark}{Remark}
\newtheorem{proposition}{Proposition}
\newtheorem{theorem}{Theorem}
\newcommand{\eq}[1]{\textbf{Eq.~\ref{eq:#1}}}
\newcommand{\fig}[1]{\textbf{Fig.~\ref{fig:#1}}}
\newcommand{\thm}[1]{Theorem~\ref{thm:#1}}
\title{Public goods games in populations with fluctuating size}
\author{Alex McAvoy$^{1}$, Nicolas Fraiman$^{2}$, Christoph Hauert$^{3}$, John Wakeley$^{4}$, and Martin A. Nowak$^{1,4,5}$}
\address{\small$^{1}$Program for Evolutionary Dynamics, Harvard University, Cambridge, MA 02138 \\ $^{2}$Department of Statistics and Operations Research, University of North Carolina at Chapel Hill, Chapel Hill, NC 27599 \\$^{3}$Department of Mathematics, University of British Columbia, 1984 Mathematics Road, Vancouver, BC, Canada V6T 1Z2 \\ $^{4}$Department of Organismic and Evolutionary Biology, Harvard University, Cambridge, MA 02138 \\ $^{5}$Department of Mathematics, Harvard University, Cambridge, MA 02138}
\begin{document}

\allowdisplaybreaks

\begin{abstract}
Many mathematical frameworks of evolutionary game dynamics assume that the total population size is constant and that selection affects only the relative frequency of strategies. Here, we consider evolutionary game dynamics in an extended Wright-Fisher process with variable population size. In such a scenario, it is possible that the entire population becomes extinct. Survival of the population may depend on which strategy prevails in the game dynamics. Studying cooperative dilemmas, it is a natural feature of such a model that cooperators enable survival, while defectors drive extinction. Although defectors are favored for any mixed population, random drift could lead to their elimination and the resulting pure-cooperator population could survive. On the other hand, if the defectors remain, then the population will quickly go extinct because the frequency of cooperators steadily declines and defectors alone cannot survive. In a mutation-selection model, we find that (i) a steady supply of cooperators can enable long-term population survival, provided selection is sufficiently strong, and (ii) selection can increase the abundance of cooperators but reduce their relative frequency. Thus, evolutionary game dynamics in populations with variable size generate a multifaceted notion of what constitutes a trait's long-term success.
\end{abstract}

\maketitle

\section{Introduction}
The emergence of cooperation is a prominent research topic in evolutionary theory. The problem is usually formulated in such a way that it pays to exploit cooperators, yet the payoff to one cooperator against another is greater than the payoff to one defector against another \citep{axelrod:Science:1981}. In spite of this conflict of interest, cooperation is broadly observed in nature, and various mechanisms have been put forth to explain its evolution \citep{nowak:Science:2006}. In fact, the question of how cooperators may proliferate in social situations is one of the main concerns of evolutionary game theory, a framework that models cooperation and defection as strategies of a game.

Evolutionary game-theoretic models typically involve a number of assumptions. In this study, we are concerned with two potentially restrictive ones: (i) the population size is infinite or (ii) the population size is finite but fixed and unaffected by evolution. While the classical replicator equation \citep{taylor:MB:1978,hofbauer:JTB:1979,hofbauer:CUP:1998} can be used to model large populations that fluctuate in size \citep{hauert:PRSB:2006}, replicator dynamics capture only the relative frequencies of the strategies. Even stochastic models that account for populations of any finite size, such as the Moran model or the Wright-Fisher model and their game-theoretic extensions, usually require the number of players to remain fixed over time \citep{moran:MPCPS:1958,nowak:Nature:2004,taylor:BMB:2004,lieberman:Nature:2005,ohtsuki:Nature:2006,taylor:Nature:2007,szabo:PR:2007,tarnita:PNAS:2009,nowak:PTRSB:2009,hauert:JTB:2012,debarre:NC:2014}. Here, we explore the evolutionary dynamics of cooperation in social dilemmas when the population can fluctuate in size and even go extinct.

Branching processes have a rich history in theoretical biology \citep[see][]{kimmel:S:2015} and are a natural way to model populations that vary in size. A number of recent works have considered non-constant population size within evolutionary game theory. \citet{hauert:PRSB:2006} treat ecological dynamics in evolutionary games by modifying the replicator equation to account for population density and show that fluctuating density can lead to coexistence between cooperators and defectors. \citet{melbinger:PRL:2010} illustrate how the decoupling of stochastic birth and death events can lead to transient increases in cooperation. By allowing a game to influence carrying capacities, \citet{novak:JTB:2013} demonstrate that variable density regulations can change the stability of equilibria relative to the replicator equation. Furthermore, demographic fluctuations can act as a mechanism to promote cooperation in public goods games \citep{constable:PNAS:2016} and indefinite coexistence (as opposed to fixation) in coexistence games \citep{ashcroft:JTB:2017}. Fluctuating size in a Lotka-Volterra model also leads to different growth rates for isolated populations of cooperators and defectors \citep{huang:PNAS:2015}, and even when the two competing types are neutral at the equilibrium size, fluctuations can still give one type a selective advantage over the other \citep{chotibut:JSP:2017}. When traits have the same growth rate, these fluctuations also affect a mutant's fixation probability \citep{czuppon:preprint:2017}.

Here, we develop a branching-process model based on the Wright-Fisher model \citep{fisher:OUP:1930,wright:G:1931} for a population with non-overlapping generations in which trait values of offspring are sampled from the previous generation depending on the success of individuals (parents) in a sequence of interactions \citep{ewens:S:2004,imhof:JMB:2006}. Success is quantified in terms of payoffs, which come from a game and represent competition between the different types, or strategies. Usually, the Wright-Fisher process is defined such that every subsequent generation has exactly the same size as the first generation. We consider a variant of this model for populations that fluctuate in size, in which each individual has a Poisson-distributed number of surviving offspring, with an expected value determined by payoffs from interactions in a game.

Recently, \citet{houchmandzadeh:B:2015} considered a model similar to the one we study here, but under the assumption that the population size in the next generation, $N\left(x\right)$, is a deterministic function of the fraction of cooperators in the present generation, $x$. The update rule then has essentially two stages: (i) determine the population size of the next generation, $N\left(x\right)$, and (ii) sample $N\left(x\right)$ offspring from the previous generation using the standard Wright-Fisher rule \cite{houchmandzadeh:B:2015}. In contrast, the model we treat has a stochastic population size that does not need to be prespecified. Moreover, it depends on the numbers of both cooperators and defectors in the current generation, not just on the fraction of cooperators. As mentioned above, we also allow for the possibility that the entire population goes extinct.

We use the public goods game to study the evolution of cooperation in an unstructured population. Cooperators maintain a shared resource or public good, with a cost, $w$, to their fecundity. Defectors neither help maintain the public good nor incur a cost. The resource is distributed evenly among all individuals in the population, but its per-capita effect on fecundity can be greater than the per-capita cost of its production \cite{sigmund:PUP:2010}. A multiplication factor, $R>1$, quantifies this return on the investment made by cooperators toward production of the good. In this model, everyone is better off when the whole population consists of cooperators, but defectors can benefit from cooperation without paying the cost. 

We show that when the population size can fluctuate, selection can be essential for the survival of the population as a whole. In our model, population growth and decline are influenced by the public goods game but also by a baseline reproductive capacity, $f_{N}$, which is the same for all individuals and which primarily acts to constrain runaway growth. Even when cooperators are less frequent than defectors in the mutation-selection equilibrium, there can be an optimal cost of cooperation, $w^{\ast}$, depending on $f_{N}$, at which (i) the population does not immediately go extinct, with the numbers of cooperators and defectors each fluctuating around equilibrium values, and (ii) the frequency of cooperators is maximized subject to (i). In other words, cooperation can be favored by selection at a positive cost of cooperation when there is demographic stochasticity, which marks a departure from the behavior of models with fixed size.

Furthermore, even when the population would survive due to the baseline reproductive capacity alone, selection can increase the number of cooperators while at the same time decreasing their frequency. In models where the population size is assumed to be fixed, cooperators are less frequent than defectors if and only if cooperators are less abundant than defectors. However, this equivalence breaks down when the population size can fluctuate because the frequency of a strategy is determined by both its abundance and the population size. Thus, the evolutionary success of a strategic type depends on more than just the strategy.

\section{Description of the model}
We use the term ``reproductive capacity'' rather than ``fitness'' \citep[see][]{doebeli:eLife:2017} to refer to the expected number of offspring of an individual. In a growing population, the average reproductive capacity is greater than one. In a shrinking population, it is less than one. In a population of fixed size or a population at its carrying capacity, the average reproductive capacity is equal to one. If different individuals in the same population have different reproductive capacities, some individuals have a selective advantage over others.

\subsection{Update rule}
We assume that individuals reproduce asexually, so our model corresponds to a model of haploid genetic transmission. In the standard Wright-Fisher process, the population has fixed size, $N$. Thus, in a game with two strategies, $C$ (``cooperate'') and $D$ (``defect''), the state of the population is determined by number of cooperators, $x_{C}$, or by their relative frequency, $x_{C}/N$. If $F_{C}=F_{C}\left(x_{C}\right)$ and $F_{D}=F_{D}\left(x_{C}\right)$ give the reproductive capacities of cooperators and defectors, respectively, in the state with $x_{C}$ cooperators, then the probability of transitioning to the state with $y_{C}$ cooperators (provided $0\leqslant y_{C}\leqslant N$) is
\begin{linenomath}
\begin{align}\label{eq:wfUpdateRule}
\mathbf{P}\left(y_{C} \mid x_{C}\right) &= \binom{N}{y_{C}}\left(\frac{x_{C}F_{C}}{x_{C}F_{C}+\left(N-x_{C}\right) F_{D}}\right)^{y_{C}}\left(\frac{\left(N-x_{C}\right) F_{D}}{x_{C}F_{C}+\left(N-x_{C}\right) F_{D}}\right)^{N-y_{C}} .
\end{align}
\end{linenomath}
In other words, the cooperators in one generation are sampled from the previous generation according to a binomial distribution with mean $Nx_{C}F_{C}/\left(x_{C}F_{C}+\left(N-x_{C}\right) F_{D}\right)$. One biological interpretation for this transition rule is the following: Each player in one generation produces a large number of gametes from which the surviving offspring in the next generation are selected. These offspring are sampled at random, weighted by the success of the parents in competitive interactions, subject to a constant population size.

In treating populations that fluctuate in size, we drop the assumed dependence that $y_{D} = N - y_{C}$ which is implied above, but continue to hold that generations are non-overlapping. Let $F_{C}=F_{C}\left(x_{C},x_{D}\right)$ and $F_{D}=F_{D}\left(x_{C},x_{D}\right)$ give the reproductive capacities of cooperators and defectors, respectively, when the current generation is in state $\left(x_{C},x_{D}\right)$. We assume that the number of offspring per individual follows a Poisson distribution, with parameter $F_{C}$ for cooperators and parameter $F_{D}$ for defectors. Then the probability of transitioning from state $\left(x_{C},x_{D}\right)$ to state $\left(y_{C},y_{D}\right)$ in one generation is
\begin{linenomath}
\begin{align}\label{eq:wfbUpdateRule}
\mathbf{P}\left( y_{C},y_{D} \mid x_{C},x_{D} \right) &= \left(\frac{\left(x_{C}F_{C}\right)^{y_{C}}e^{-x_{C}F_{C}}}{y_{C}!}\right)\left(\frac{\left(x_{D}F_{D}\right)^{y_{D}}e^{-x_{D}F_{D}}}{y_{D}!}\right) .
\end{align}
\end{linenomath}
\eq{wfbUpdateRule} reduces to \eq{wfUpdateRule} when the population size is fixed and equal to $N$ (see \citep{haccou:CUP:2005} and also \ref{sec:appendixA}).

The transition probabilities of \textbf{Eqs. \ref{eq:wfUpdateRule}--\ref{eq:wfbUpdateRule}} do not take into account errors in strategy transmission, i.e. mutations. In what follows, we assume that when an individual reproduces, the offspring acquires a random strategy with probability $u\geqslant 0$. Thus, with probability $1-u$, the offspring acquires the strategy of the parent and with probability $u$, becomes either a cooperator or a defector (uniformly at random). For simplicity (and by convention \citep[e.g. see][]{tarnita:JTB:2009}), we assume symmetric mutation, with $C\rightarrow D$ as likely as $D\rightarrow C$.

Using the binomial distribution with parameter $q$ and density function $b_{q}\left(n,k\right) :=\binom{n}{k}q^{k}\left(1-q\right)^{n-k}$, the mutation rate, $u$, is incorporated into the transition rule defined by \eq{wfbUpdateRule} as follows:
\begin{linenomath}
\begin{align}\label{eq:wfbUpdateRuleMutation}
\mathbf{P}^{u}\left( y_{C},y_{D} \mid x_{C},x_{D} \right) &= \sum_{z_{C}=0}^{y_{C}+y_{D}} \mathbf{P}\left( z_{C},y_{C}+y_{D}-z_{C} \mid x_{C},x_{D} \right) \nonumber \\
&\quad\quad \times\sum_{k=\max\left\{0,z_{C}-y_{C}\right\}}^{\min\left\{z_{C},y_{D}\right\}}b_{u/2}\left(z_{C},k\right) b_{u/2}\left(y_{C}+y_{D}-z_{C},k+y_{C}-z_{C}\right) .
\end{align}
\end{linenomath}
In words, we sum over all transitions defined by \eq{wfbUpdateRule} such that, after mutations are accounted for, there are $y_{C}$ cooperators and $y_{D}$ defectors. Note that mutations themselves do not affect the population size.

We refer to the process with transitions governed by \textbf{Eqs. \ref{eq:wfbUpdateRule}--\ref{eq:wfbUpdateRuleMutation}} as a ``Wright-Fisher branching process.'' Branching processes of this sort have been treated elsewhere \citep[see][]{haccou:CUP:2005}, notably with Poisson-distributed offspring counts but frequency-independent reproductive capacities \citep{haccou:TPB:1996}. Branching processes have also been considered in models with both density-dependent \citep{lambert:AAP:2005} and frequency-dependent \cite{wild:BMB:2010,bao:TPB:2012} reproductive capacities. We consider a Wright-Fisher branching process in which the reproductive capacities of cooperators and defectors in \textbf{Eqs. \ref{eq:wfbUpdateRule}--\ref{eq:wfbUpdateRuleMutation}} are equal to a baseline reproductive capacity times a factor which depends on the outcome of a public goods game.

\subsection{Baseline reproductive capacity}
The standard Wright-Fisher model, and variants like that in \citep{houchmandzadeh:B:2015}, assume that population regulation is very strong or deterministic. This may be a good approximation for large populations and those close to carrying capacity \citep[but see][]{chotibut:JSP:2017}. However, fully stochastic treatments are warranted for populations that fluctuate in size and may go extinct. In the Wright-Fisher branching process we consider, population regulation is achieved through a balance of players' payoffs in an evolutionary game and a baseline reproductive capacity which represents all other ecological factors. The dynamics depend on the numbers of cooperators and defectors, not just on their relative frequencies. The baseline reproductive capacity is a function of the total population size and is the same for every individual. It captures the ecological constraints which keep populations from growing without limit. 

Let $f_{N}$ be the per-capita reproductive capacity (again, the expected number of offspring) applicable to all individuals when the population size is $N$. We assume that $f_{N}$ is a non-increasing function of the population size, $N$, so that larger populations lead (in general) to lower per-capita reproductive capacities due to ecological constraints. This baseline reproductive capacity is the fluctuating-size analogue of the ``background fitness'' that is typically used in models with fixed population size \citep{nowak:PTRSB:2009}.

While the baseline reproductive capacity does not vary from player to player, it can depend on the number of players in the population, $N$. If there are limited resources and reproduction slows as the population grows, then $f_{N}$ is a decreasing function of $N$. An example we consider is $f_{N} = c_{K}+r\left(1-\frac{N}{K}\right)$ for some $c_{K}$, $r$, and $K$. In this case, $r$ reflects the growth rate when the population is small, and $c_{K}$ gives the reproductive capacity when the population has size $N=K$. To ensure $f_{N}$ is non-negative, we set $f_{N}=0$ whenever $c_{K}+r\left(1-\frac{N}{K}\right)\leqslant 0$. A second example we consider is one in which the baseline reproductive capacity is constant up to a threshold value of $N$ and decreasing thereafter, specifically with $f_{N} =1+r$ if $N\leqslant K$ and $f_{N} =\left(1+r\right)K/N$ if $N>K$ for some $r$ and $K$. For both of these functions, $f_{N}$ decays to $0$ as $N\rightarrow\infty$ (see \fig{baselineRates}).

\begin{figure}
\centering
\includegraphics[width=0.8\textwidth]{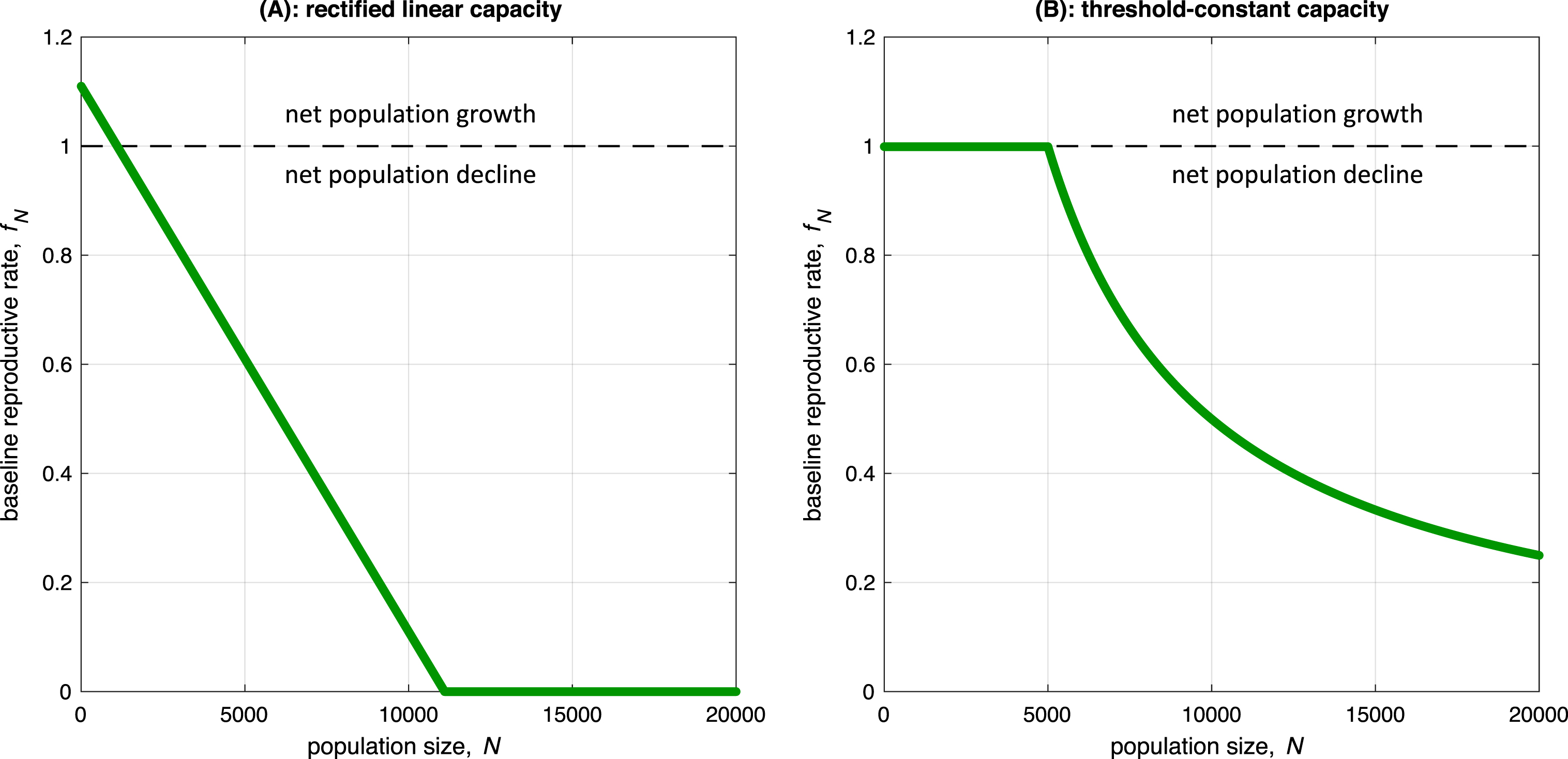}
\caption{Two examples of baseline reproductive capacities, $f_{N}$. In (A), $f_{N}=\max\left\{0,c_{K}+r\left(1-\frac{N}{K}\right)\right\}$ for $c_{K}=1.1$, $r=0.01$, and $K=100$. This function decreases linearly to $0$ and remains at $0$ for all sufficiently large $N$ since the reproductive capacity is (by definition) non-negative. This function is called ``rectified'' linear because of the constraint that $f_{N}\geqslant 0$. In (B), $f_{N}=\left(1+r\right)\min\left\{1,\frac{K}{N}\right\}$ for $r=-0.001$ and $K=5000$. The reproductive capacity is held constant and equal to $1+r$ whenever $1\leqslant N\leqslant K$. When the population size is greater than $K$, the resources contributing to reproductive capacities are no longer abundant and must be divided up among the individuals in the population, which gives the expression $f_{N}=\left(1+r\right)\frac{K}{N}$ for $N\geqslant K$. This function is called ``threshold-constant'' because it is constant up to the threshold population size $N=K$. Note that in (A), when the population size is initially small, there is net population growth since $f_{1}>1$. In (B), $f_{N}<1$ for every $N$, so any population evolving according to this function alone will (on average) shrink in the direction of extinction. This immediate extinction could be prevented by choosing $r>0$ instead of $r<0$ as in (B).\label{fig:baselineRates}}
\end{figure}

\subsection{Public goods game}
Consider a game with two strategies, $C$ (``cooperate'') and $D$ (``defect''), and suppose that a defector does nothing and a cooperator incurs a cost, $w$, representing a fraction of his or her baseline reproductive capacity, $f_{N}$, in order to contribute to the provision of a public good. The public good is distributed evenly among all $N$ players in the population \citep{sigmund:PUP:2010}. Finally, a multiplication factor, $R>1$, quantifies the return on investment in this shared resource \citep{chen:PLOSONE:2012}.

The reproductive capacities of cooperators and defectors in state $\left(x_{C},x_{D}\right)$ are given by
\begin{linenomath}
\begin{subequations}\label{eq:cooperatorDefectorRate}
\begin{align}
F_{C}\left(x_{C},x_{D}\right) &= \left( 1-w + wR\frac{x_{C}}{x_{C}+x_{D}} \right) f_{x_{C}+x_{D}} ; \label{eq:cooperatorRate} \\
F_{D}\left(x_{C},x_{D}\right) &= \left( 1 + wR\frac{x_{C}}{x_{C}+x_{D}} \right) f_{x_{C}+x_{D}} . \label{eq:defectorRate}
\end{align}
\end{subequations}
\end{linenomath}
When $w\ll 1$, the contribution of this game to reproductive capacity is small. On the other hand, when $w=1$, cooperators expend their entire baseline reproductive capacity contributing to the public good. Unlike in many evolutionary games in populations of fixed size, where $w$ represents selection strength and quantifies relative differences between traits, here the cost of cooperation admits an intuitive biological interpretation: cooperators sacrifice a fraction $w$ of their expected number of offspring in hopes of seeing a return.

For a neutral population whose dynamics are governed only by the non-increasing baseline reproductive capacity, $f_{N}$, if $f_{1}<1$ then the time to extinction will be relatively short. In contrast, if $f_{1}>1$ the population will have a positive growth rate until $N$ becomes large enough that $f_{N}<1$. Then the population will grow to a stochastic carrying capacity and fluctuate around this size, possibly for considerable time. (For the two classes of baseline reproductive capacities we consider here, this carrying capacity need not be exactly $K$; see \ref{sec:appendixA}). We will refer to situations of this sort as ``metastable'' because all the populations we consider would eventually go extinct. According to \eq{cooperatorDefectorRate}, payoffs from the public goods game can increase reproductive capacities, with the possibility of positive population growth rates even if $f_{1}<1$. Due to our choice of non-increasing functions for $f_{N}$ (that decay to $0$ as $N$ grows), this will lead to metastable states but never to unbounded growth of the population. 

\section{Evolutionary dynamics of the Wright-Fisher branching process}
Let $\mathbf{E}_{\left(x_{C},x_{D}\right)}\left[y_{C}\right]$ and $\mathbf{E}_{\left(x_{C},x_{D}\right)}\left[y_{D}\right]$ denote the expected numbers of cooperator and defectors in the next generation given $x_{C}$ cooperators and $x_{D}$ defectors in the current generation. In this section, we are mainly interested in the existence of metastable equilibria, which are defined as states, $\left(x_{C}^{\ast},x_{D}^{\ast}\right)$, such that
\begin{linenomath}
\begin{subequations}
\begin{align}
\mathbf{E}_{\left(x_{C}^{\ast},x_{D}^{\ast}\right)}\left[y_{C}\right] &= x_{C}^{\ast} ; \\
\mathbf{E}_{\left(x_{C}^{\ast},x_{D}^{\ast}\right)}\left[y_{D}\right] &= x_{D}^{\ast} .
\end{align}
\end{subequations}
\end{linenomath}
Populations will fluctuate around these states for some time, although extinction is inevitable. The time to extinction depends on the population's carrying capacity (see below and \ref{sec:extinctionTime}). Even when the population eventually goes extinct with probability $1$, it can take extremely long to do so.

We are particularly interested in the case when a population of defectors cannot survive for long on their own but a population of cooperators can. While a population of defectors evolves according to $f_{N}$, a population of cooperators evolves according to the reproductive capacity $\left(1+w\left(R-1\right)\right) f_{N}$, which can be greater than $1$ even when $f_{N}<1$. In polymorphic populations, cooperators and defectors have reproductive capacities given by $F_{C}\left(x_{C},x_{D}\right)$ and $F_{D}\left(x_{C},x_{D}\right)$ in \eq{cooperatorDefectorRate}, which are functions of $x_{C}$ and $x_{D}$, but also depend on the baseline reproductive capacity, $f_{N}$, the cost of cooperation, $w$, and the multiplication factor for the public good, $R$. We also consider situations in which a population of defectors can reach a metastable carrying capacity, i.e. when $f_{1}>1$. In this case, we are interested in the effects that $w$ and $R$ can have on the numbers of cooperators and defectors in polymorphic metastable states.

\subsection{Selection dynamics (without mutation)}
When the initial numbers of cooperators and defectors are small, stochastic effects have a profound influence over the long-run composition of the population. We show in \ref{sec:appendixB} that any non-zero metastable equilibrium must be monomorphic (all-cooperator or all-defector) for the update rule defined by \eq{wfbUpdateRule}. Although defectors generally have larger growth rates than cooperators in mixed populations, they can go extinct quickly in small populations, which, in turn, can permit cooperators to prosper. For example, suppose that defectors cannot survive on their own ($f_{1}<1$), which means that any population of defectors shrinks, on average, from one generation to the next. If any population of cooperators grows, due to the multiplication factor $R>1$, then the only populations that persist beyond a short time horizon are those composed entirely of cooperators. Therefore, cooperators have a type of survivorship bias. \fig{smallInitialConditions} illustrates this phenomenon, showing that defectors often outcompete cooperators (approximately 85\% of the time) when both are in the population (A), but once one type goes extinct, the population must consist of just cooperators in order to survive for any considerable length of time (B). These simulations are done with the baseline capacity $f_{N}=\max\left\{0,c_{K}+r\left(1-\frac{N}{K}\right)\right\}$, where $c_{K}=0.99$, $r=0.01$, and $K=100$; cost of cooperation $w=0.1$; and multiplication factor $R=2.0$.

\begin{figure}
\centering
\includegraphics[width=0.8\textwidth]{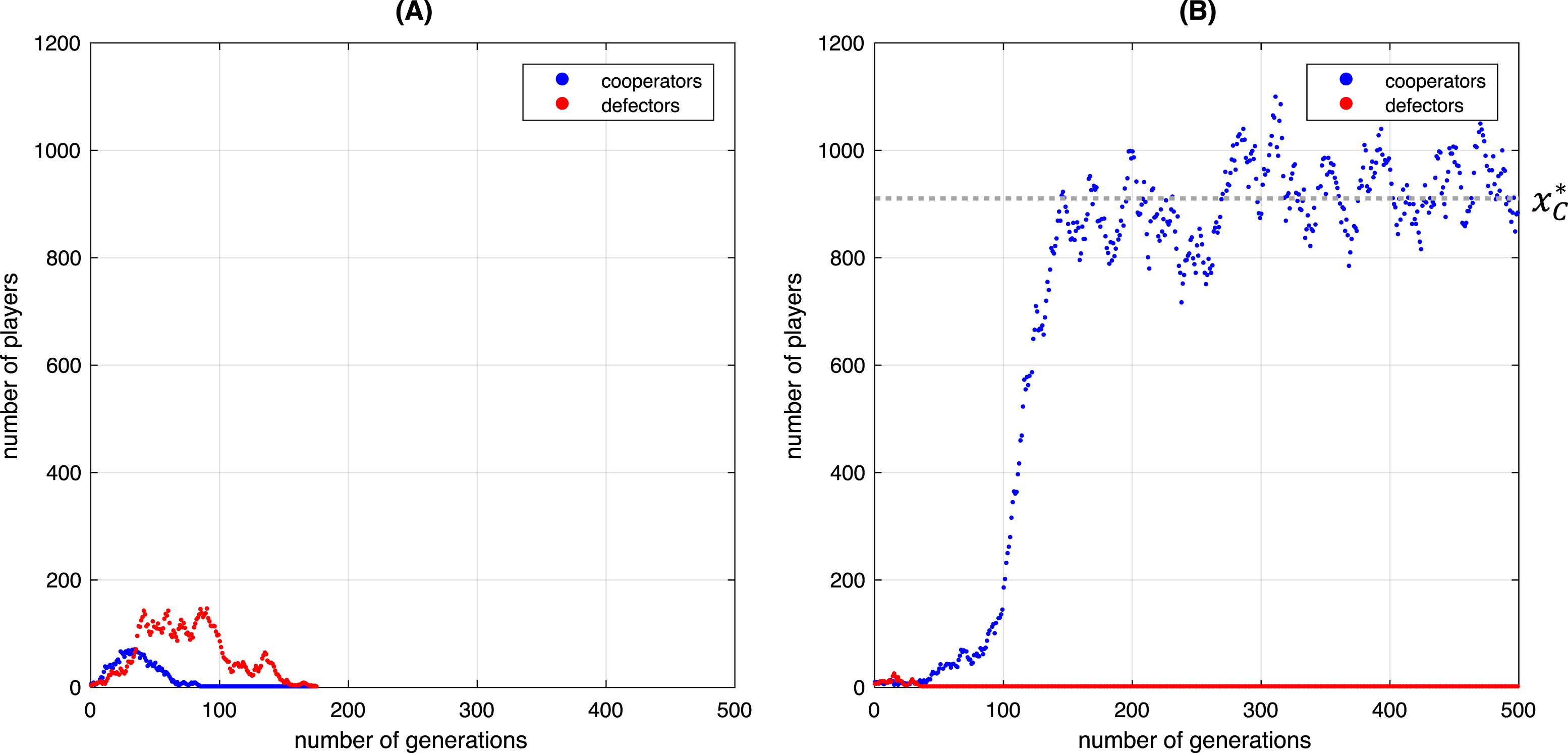}
\caption{Dynamics of the Wright-Fisher branching process in the absence of mutation. The baseline reproductive capacity is given by $f_{N}=\max\left\{0,c_{K}+r\left(1-\frac{N}{K}\right)\right\}$, where $c_{K}=0.99$, $r=0.01$, and $K=100$. The initial population consists of just $5$ cooperators and $5$ defectors. Approximately 15\% (rounded to the nearest percentage) of runs result in the behavior of panel (B), with defectors going extinct and cooperators reaching their carrying capacity. The remaining runs resemble panel (A), with almost-immediate extinction of the entire population. Without mutation, any metastable equilibrium is necessarily monomorphic; since defectors cannot survive on their own ($f_{1}<1$), it follows that only all-cooperator states can be observed as the long-run outcome of these initial conditions. Notably, while defectors go extinct in fewer than $200$ generations in (A), in (B) the population of cooperators thrives even after $10^{9}$ generations (although it goes extinct, eventually, with probability $1$). To demonstrate the initial ascent of cooperators, we include here only the first $500$ generations. Parameters: $u=0$, $w=0.1$, and $R=2.0$.\label{fig:smallInitialConditions}}
\end{figure}

\subsection{Mutation-selection dynamics}
A common way to quantify the evolutionary success of cooperators is to introduce strategy mutations and study the frequency of cooperators in the mutation-selection equilibrium \citep{tarnita:JTB:2009,antal:JTB:2009,tarnita:PNAS:2011}. Mutations indicate errors in the transmission (either cultural or genetic) of the two strategies (cooperation and defection) and can be small \citep{nowak:BP:2006,wu:JMB:2011} or large \citep{traulsen:PNAS:2009} depending on their interpretation. The success of cooperation is quantified by its average frequency in the population over many generations. In a population of cooperators and defectors under neutral drift (i.e. without selective differences between the two types), cooperators are indistinguishable from defectors and are equally frequent in the mutation-selection equilibrium. If selection brings the cooperator frequency above $1/2$, then selection is said to favor cooperation. By this metric, selection typically disfavors cooperation in unstructured populations \citep{tarnita:JTB:2009}.

If the population size is static and the update rule is that of the Wright-Fisher process, \eq{wfUpdateRule}, then the baseline reproductive capacity appearing in \eq{cooperatorDefectorRate}, $f_{N}$, cancels out. Only the relative fitnesses of cooperators and defectors matters. The dynamics are then captured in the relative frequencies of cooperators and defectors. Since cooperators are always less frequent than defectors when the intensity of selection, $w$, is positive, selection unambiguously disfavors cooperators relative to defectors. This result can be seen in \fig{staticDynamic}(D)--(F), in which results are shown for three different values of $w$. These simulations were generated using a multiplication factor of $R=2.0$ and a mutation rate of $u =0.01$. That selection favors defectors is a standard property of many social dilemmas in unstructured populations; additional mechanisms (such as spatial structure) must typically be present in order for cooperators to outperform defectors \citep{nowak:PTRSB:2009}.

When the population size can fluctuate and $u$ is the probability that a mutation occurs, then the dynamics are governed by \eq{wfbUpdateRuleMutation}. Here, it is still the case that selection decreases the frequency of cooperators relative to defectors. On the other hand, the population can quickly go extinct if selection is not sufficiently strong, which we illustrate in \fig{staticDynamic}(A)--(C) with $R=2.0$, $u=0.01$, and $f_{N} =\left(1+r\right)\min\left\{ 1 , \frac{K}{N} \right\}$ with $r=-0.001$ and $K=5000$. Thus, cooperation can be favored in such situations because it protects against extinction.

\begin{figure}
\centering
\includegraphics[width=0.8\textwidth]{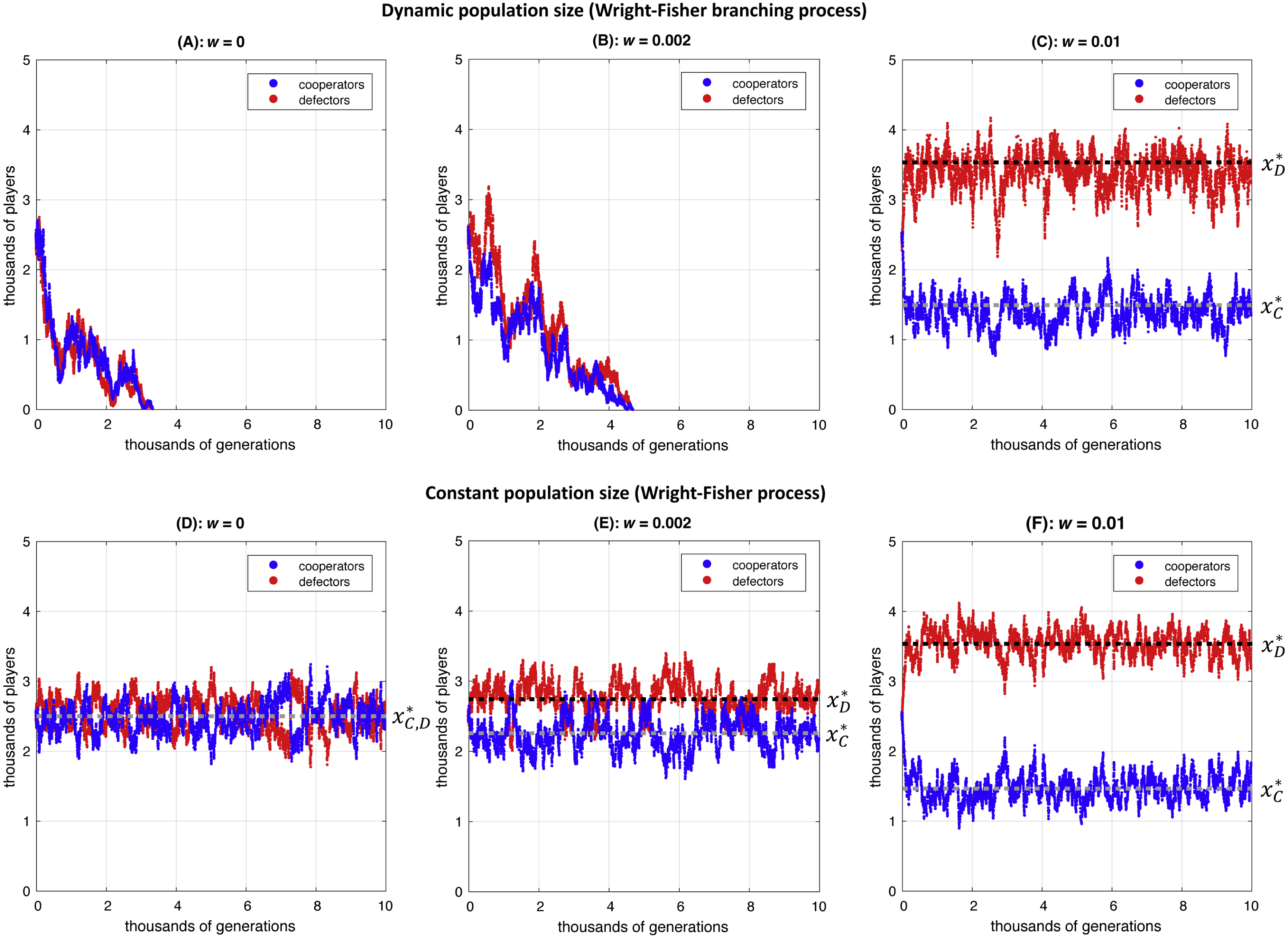}
\caption{Mutation-selection dynamics for the Wright-Fisher branching process (panels (A)--(C)) and the Wright-Fisher process with constant population size (panels (D)--(F)). The baseline reproductive capacity is $f_{N}=\left(1+r\right)\min\left\{1,\frac{K}{N}\right\}$, where $r=-0.001$ and $K=5000$, although this is relevant only for (A)--(C) since the population size is held fixed in (D)--(F). In all panels, the population initially consists of $2500$ cooperators and $2500$ defectors, the strategy mutation rate is $u =0.01$, and the multiplication factor for the public goods game is $R=2.0$. In the top row, the population size can fluctuate over time; in the bottom row, it is held constant at $5000$ players. In (A) and (D), there are no selective differences between cooperators and defectors ($w =0$). In (B) and (E), a cooperator sacrifices a small fraction, $w =0.002$, of his or her baseline reproductive capacity in order to provide the community with a benefit. In (C) and (F), cooperators sacrifice a larger portion, $w =0.01$, of their reproductive capacity when provisioning a public good. While the population is artificially prevented from going extinct in (D)--(F), it can go extinct in (A)--(C) and does so quickly when the cost of cooperation is too small (panels (A) and (B)) since $f_{1}<1$. Although increasing the cost of cooperation tends to decrease the frequency of cooperators relative to defectors, a sufficient amount of selection is necessary for the survival of the population as a whole. Therefore, there is an optimal cost of cooperation, $w^{\ast}$, falling between $0.002$ and $0.01$, that maximizes the frequency of cooperators subject to survival of the population.\label{fig:staticDynamic}}
\end{figure}

One key difference from models with fixed population size is that, in a branching process, the population either grows unboundedly or eventually goes extinct \citep{jagers:SPMS:2012,hamza:JMB:2015}. That is, if the population remains bounded in size, then the only true stationary state is extinction. Despite this behavior capturing the long-run dynamics of the process, there can also exist metastable equilibria in which the process persists prior to population extinction. We show in \ref{sec:extinctionTime} that the persistence time in our model grows exponentially in $K$ \citep[see also][]{jagers:JAP:2011,faure:AAP:2014,schreiber:S:2017}, meaning that if $\mathbf{E}\left[\tau_{K}\right]$ is the expected number of generations prior to extinction after starting in the quasi-stationary distribution, then there exists $c>0$ (independent of $K$) for which $\mathbf{E}\left[\tau_{K}\right]\geqslant e^{cK}$.

More informally, if $\left(x_{C}^{\ast},x_{D}^{\ast}\right)$ is a metastable equilibrium and $\sigma$ denotes standard deviation, then
\begin{linenomath}
\begin{subequations}
\begin{align}
\sigma_{\left(x_{C}^{\ast},x_{D}^{\ast}\right)}\left[y_{C}\right] / \mathbf{E}_{\left(x_{C}^{\ast},x_{D}^{\ast}\right)}\left[y_{C}\right] &= \frac{1}{\sqrt{x_{C}^{\ast}}} ; \\
\sigma_{\left(x_{C}^{\ast},x_{D}^{\ast}\right)}\left[y_{D}\right] / \mathbf{E}_{\left(x_{C}^{\ast},x_{D}^{\ast}\right)}\left[y_{D}\right] &= \frac{1}{\sqrt{x_{D}^{\ast}}} .
\end{align}
\end{subequations}
\end{linenomath}
Therefore, the fluctuations around a metastable equilibrium constitute only small fractions of $x_{C}^{\ast}$ and $x_{D}^{\ast}$ when $x_{C}^{\ast}$ and $x_{D}^{\ast}$ are sufficiently large (see \ref{sec:appendixB} for further details). Since $x_{C}^{\ast}$ and $x_{D}^{\ast}$ grow with $K$, and since the fluctuations in $x_{C}^{\ast}$ and $x_{D}^{\ast}$ are on the order of $\sqrt{x_{C}^{\ast}}$ and $\sqrt{x_{D}^{\ast}}$, respectively, the expected amount of time until deviations from the mean destroy the population, i.e. deviates so far as to hit $\left(x_C,x_D\right)=(0,0)$, grows rapidly in $K$. \fig{histograms} shows the quasi-stationary distributions of $x_{C}$ and $x_{D}$ that result from fluctuations around metastable equilibria, such as those shown in \fig{staticDynamic}(C).

\begin{figure}
\centering
\includegraphics[width=0.8\textwidth]{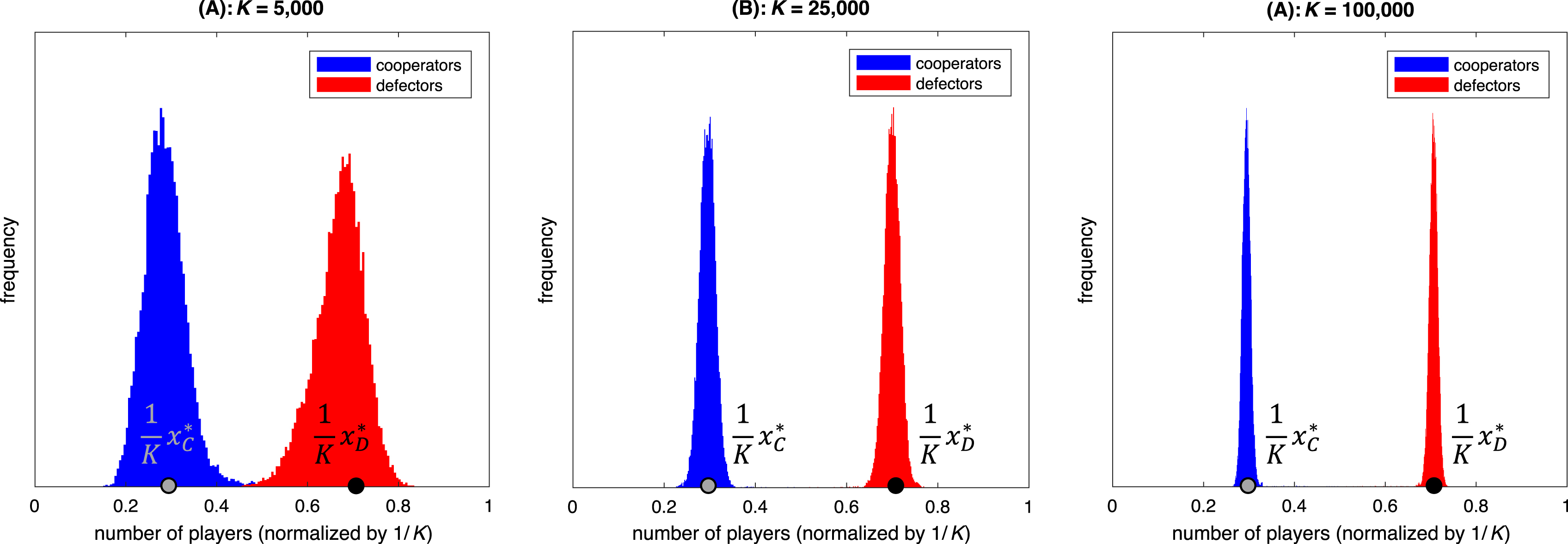}
\caption{Simulation of the quasi-stationary distribution for several values of $K$. Here, $K$ enters in the baseline reproductive capacity, $f_{N} =\left(1+r\right)\min\left\{1,\frac{K}{N}\right\}$, where $r=-0.001$. In each panel, cooperators and defectors are each initialized at an equal abundance of $K/2$. The plots are histograms for cooperator (blue) and defector (red) abundance over the first $25000$ generations. The equilibrium fraction of cooperators, $p$, depends on only $u$, $w$, and $R$ and is the same for all panels. Therefore, the peaks are centered at $x_{C}^{\ast}/K=0.999x\left(1+w\left(R-1\right) x\right)\approx 0.2954$ for cooperators and $x_{D}^{\ast}/K=0.999\left(1-x\right)\left(1+w\left(R-1\right) x\right)\approx 0.7066$ for defectors (see \ref{sec:appendixB}). As $K$ grows, this quasi-stationary distribution converges to the Dirac measure centered on $\left(0.2954,0.7066\right)$. Parameters: $u=w=0.01$ and $R=2.0$.\label{fig:histograms}}
\end{figure}

The dynamics of this public goods game result from the balance among three factors: mutation, selection, and population survival. Although long-term population survival can be achieved by increasing the cost of cooperation, $w$, it can also be destroyed by decreasing the mutation rate, $u$. In \ref{sec:appendixB}, we show that for any $N\geqslant 1$ and any non-zero mutation rate and cost of cooperation, there exists a critical multiplication factor, $R_{N}^{\ast}$, such that the population is supported at a metastable equilibrium consisting of at least $N$ players whenever $R>R_{N}^{\ast}$. In general, the harmful effect (population extinction) of either low costs of cooperation or low mutation rates can be mitigated by increasing the return on investment in the public good.

Selection can also increase cooperator abundance while decreasing their relative frequency (\fig{cooperatorCount}). This phenomenon is a consequence of the fact that the presence of cooperators can change the carrying capacity of population. That abundance and frequency can move in opposite directions is unique to models with variable population size and presents an interesting question about the definition of cooperator success. We show in \ref{sec:appendixB} that the fraction of cooperators present in a metastable equilibrium is independent of $f_{N}$ and depends on just $u$, $w$, and $R$. Thus, when $u$, $w$, and $R$ are fixed, defectors claim a fixed fraction (at least $1/2$) of the total population, which means that cooperators are disfavored relative to defectors. However, based on population growth alone, cooperators could be considered to be favored by selection in an absolute sense because their abundance is an increasing function of the cost of cooperation, $w$.

\begin{figure}
\centering
\includegraphics[width=0.8\textwidth]{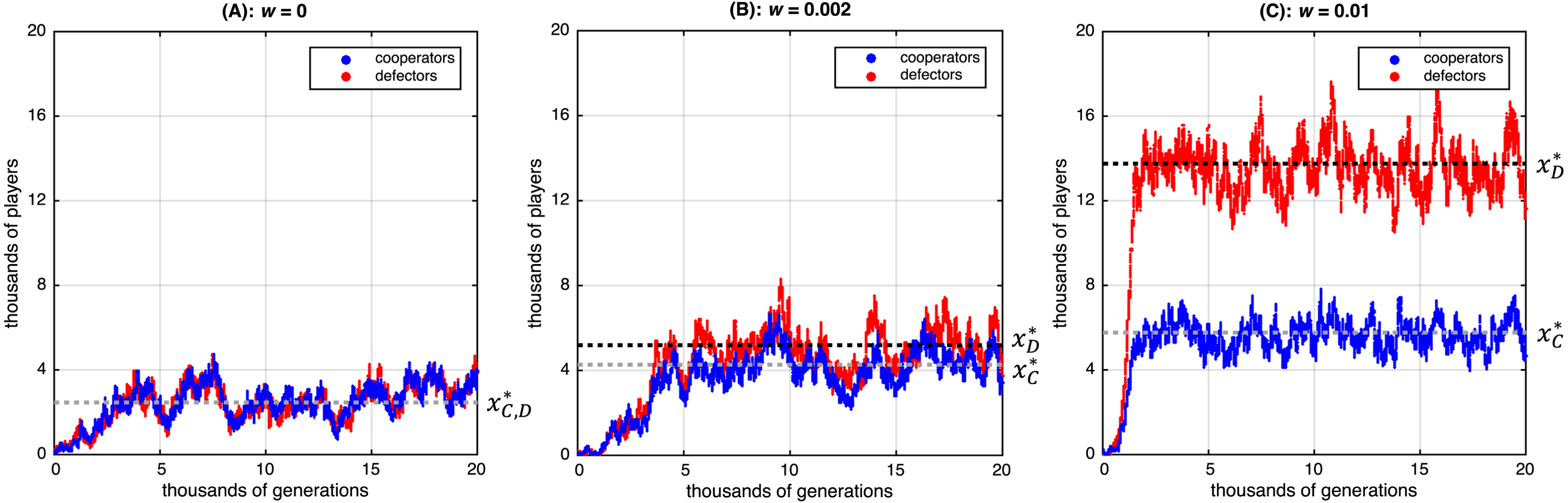}
\caption{Mutation-selection dynamics when a population of defectors can sustain itself at a positive carrying capacity ($f_{1}>1$). The baseline reproductive capacity is $f_{N}=\max\left\{0,c_{K}+r\left(1-\frac{N}{K}\right)\right\}$, where $c_{K}=1$, $r=0.001$, and $K=5000$. In each panel, the population starts out with $100$ cooperators and $100$ defectors. In (A), $w=0$ and cooperators are indistinguishable from defectors. The population grows to a metastable equilibrium with roughly equal frequencies of cooperators and defectors; $\left(x_{C}^{\ast},x_{D}^{\ast}\right) =\left(2500,2500\right)$. In (B), the cost of cooperation is positive ($w =0.002$) and defectors begin to outnumber cooperators; $\left(x_{C}^{\ast},x_{D}^{\ast}\right)\approx\left(4287,5219\right)$. At the metastable equilibrium, however, cooperators in (B) slightly outnumber cooperators in (A). In (C), the cost of cooperation is increased to $w =0.01$ and the gap between cooperator and defector abundance widens; $\left(x_{C}^{\ast},x_{D}^{\ast}\right)\approx\left(5806,13890\right)$. Although cooperators are less frequent than defectors in (C), they are more abundant in (C) than they were in (A) and (B), suggesting that their abundance is favored by selection even though their relative frequency is not. Parameters: $u=0.01$ and $R=2.0$.\label{fig:cooperatorCount}}
\end{figure}

\section{Discussion}
Public goods games have been used to model conflicts of interest ranging from cooperation in microorganisms \citep{craigmaclean:JEB:2008,czaran:PLOSONE:2009,cordero:PNAS:2012,sanchez:PLOSB:2013,allen:EL:2013}, to alarm calls in monkeys \citep{seyfarth:S:1980,cluttonbrock:W:2016}, sentinel behavior in meerkats \citep{cluttonbrock:Science:1999}, and large-scale human efforts aimed at combating climate change \citep{milinski:PNAS:2006,jacquet:NCC:2013} and pollution \citep{ehmke:EDE:2008}. Due to its linearity and close relationship to the prisoner's dilemma, the public goods game we consider is sometimes called the ``$N$-person prisoner's dilemma'' \citep{archetti:JTB:2012}. Provided $1\leqslant R\leqslant N$, this game presents a conflict of interest between the group and the individual that can be reduced to a sequence of $N-1$ prisoner's dilemma interactions \citep{hauert:Complexity:2003}. However, the analysis and interpretation of a single public goods game is somewhat more straightforward than that of a series of prisoner's dilemma interactions when the population size fluctuates over time.

In populations of fixed size, extinction is impossible and defectors can survive without the support of cooperators. This point marks perhaps the most prominent feature of classical models in evolutionary game theory that breaks down when the population size can fluctuate over time. When populations vary in size and defectors cannot sustain themselves on their own, cooperators must be present and selection must be sufficiently strong in order to maintain the existence of the population (\fig{staticDynamic}). Furthermore, when the population size is assumed to be fixed, selection decreases the frequency of cooperators if and only if it decreases the number, or abundance, of cooperators. In fluctuating populations, selection can decrease the frequency of cooperators while increasing their abundance (\fig{cooperatorCount}).

We have referred to $w$ as the ``cost of cooperation'' because of its interpretation as the expected fraction of offspring that must be sacrificed in order to cooperate. However, we note that because this fraction of an individual's baseline reproductive capacity is shared across the population, larger $w$ also means a greater effect of cooperation (similar to the return, $R$). Further, if $w=0$ then the population is identical to a population of defectors. In standard Wright-Fisher processes, $f_{N}$ is irrelevant and it is common to rewrite the terms in \eq{cooperatorDefectorRate} as $F_{C}=1+w\pi_{C}$ and $F_{D}=1+w\pi_{D}$, and to refer to $w$ the ``strength'' \citep{antal:JTB:2009,chastain:PNAS:2014,allen:Nature:2017} or ``intensity'' of selection \citep{nowak:Nature:2004,wu:PRE:2010,wu:PLoSCB:2013}. When the population size is held constant in our model, $w$ corresponds exactly to this well-known notion of selection intensity.

That the frequencies of cooperators relative to defectors in the metastable equilibria of \textbf{Figs. \ref{fig:staticDynamic}}(C), \textbf{\ref{fig:staticDynamic}}(F), and \textbf{\ref{fig:cooperatorCount}}(C) are all the same is not a coincidence. The fraction of cooperators, $p$, present at a metastable equilibrium is independent of the baseline reproductive capacity, $f_{N}$, and depends on only the mutation rate, $u$; the cost of cooperation, $w$; and the multiplication factor for the public goods game, $R$. The population size at a metastable equilibrium, however, does depend on $f_{N}$. In \ref{sec:appendixB}, we give an explicit formula for $p$ and a condition for the existence of a non-zero metastable equilibrium in terms of $f_{N}$, $w$, $R$, and $p$.

In the absence of mutation, either cooperators or defectors must be extinct in any metastable equilibrium. Although defectors outperform cooperators in a mixed population, a population of cooperators reaches a higher carrying capacity and persists at this size for a longer time than does a population of defectors. Small populations of cooperators have a distinct advantage over their all-defector counterparts due to larger growth rates. In particular, quick extinction is less likely for all-cooperator populations than it is for all-defector populations, reflecting observations of \citet{huang:PNAS:2015} and \citet{waite:PLOSCB:2015} for related models.

Unlike in the models of \citet{houchmandzadeh:BMC:2012} and \citet{houchmandzadeh:B:2015}, the population size in our model is not a deterministic function of the fraction of cooperators. Rather, it is a random quantity derived from the collective offspring pool of the parental generation. The population size at time $t+1$ depends on both the number of cooperators \textit{and} the number of defectors at time $t$. 

A framework more similar to ours is that of \citet{behar:PB:2016}, which uses stochastic differential equations to model the numbers of producers and non-producers of a common resource. Both numbers increase when small but eventually non-producers drive the population to extinction. Analogous to the possible role of mutations described here, \citet{behar:PB:2016} allow migration to reseed populations with producers. A metastable equilibrium may then occur in the total population even as each local population experiences boom and bust cycles. Our focus here has been on treating baseline reproductive dynamics as an exogenous feature and understanding on how these may be perturbed by a game to allow a variety of different carrying capacities to emerge depending on the parameters of the model.

Since population size can fluctuate in our model, one could also allow the multiplication factor of the public good, represented here by $R$, to change with $N$. If this multiplication factor gets weaker as $N$ grows, then one observes dynamics similar to those here even if $f_{N}$ is independent of $N$. Viewing $R$ as a function of $N$ presents an alternative way to model populations that cannot have unbounded growth due to environmental constraints. Another extension of our model could involve asymmetric mutation rates with, for example, $C\rightarrow D$ mutations more likely than $D\rightarrow C$. Although the importance of asymmetric mutation in population models is well-established \citep{eigen:TJPC:1988,durrett:G:2008,arnoldt:JRSI:2012}, we do not expect this would cause any qualitative changes in the results reported here unless the asymmetry was very extreme. 

Incorporating dynamic population size into classical evolutionary models complicates the analysis of their dynamics. Notably, how one measures the evolutionary success of cooperators is not as unambiguous here as it is in models with fixed population size. We have shown that selection can favor cooperator abundance despite disfavoring cooperator frequency, and that even though cooperators are exploited by defectors, they can be crucial to the survival of the population as a whole.

\setcounter{section}{0}
\renewcommand{\thesection}{Appendix~\Alph{section}}
\renewcommand{\thesubsection}{\Alph{section}.\arabic{subsection}}
\renewcommand{\thesubsubsection}{\Alph{section}.\arabic{subsection}.\arabic{subsubsection}}

\section{Wright-Fisher branching process}\label{sec:appendixA}
\subsection{Update rule}
Suppose that the population is unstructured but allowed to vary in size. For simplicity, assume that we are dealing with a symmetric game with two strategies, $C$ (``cooperate'') and $D$ (``defect''). A state of the population is then uniquely defined by a pair, $\left(x_{C},x_{D}\right)$, where $x_{C}$ and $x_{C}$ are the number of players using $C$ and $D$, respectively. The population size is $N=x_{C}+x_{D}$, which can vary over time.

Suppose that the reproductive capacities of cooperators and defectors in state $\left(x_{C},x_{D}\right)$ are given by functions $F_{C}=F_{C}\left(x_{C},x_{D}\right)$ and $F_{D}=F_{D}\left(x_{C},x_{D}\right)$, respectively. That is, the reproductive capacities are frequency-dependent and determined by the number of each type of player in the population. We define reproductive capacity as the expected number of surviving offspring of an individual over its lifetime. Ours is therefore an ``absolute'' interpretation of reproductive capacity. We assume a reproductive mechanism in which the number of offspring is Poisson-distributed with mean equal to the parent's reproductive capacity. Therefore, the probability of transitioning from $\left(x_{C},x_{D}\right)$ to $\left(y_{C},y_{D}\right)$ over a single generation in this ``Wright-Fisher branching process'' is 
\begin{linenomath}
\begin{align}\label{eq:updateRule}
\mathbf{P}_{\textrm{WFB}}(y_{C},y_{D} \mid x_{C},x_{D}) &= \left(\frac{\left(x_{C}F_{C}\right)^{y_{C}}e^{-x_{C}F_{C}}}{y_{C}!}\right)\left(\frac{\left(x_{D}F_{D}\right)^{y_{D}}e^{-x_{D}F_{D}}}{y_{D}!}\right) .
\end{align}
\end{linenomath}
As in the standard Wright-Fisher process, we assume that generations are non-overlapping.

\begin{remark}
If the population size is static and fixed at $N$, then, for $x_{C}+x_{D}=y_{C}+y_{D}=N$,
\begin{linenomath}
\begin{align}
\mathbf{P}_{\textrm{WFB}}&(y_{C},y_{D} \mid x_{C},x_{D}\ ;\ \textrm{fixed population size} ) \nonumber \\
&= \frac{\left(\frac{\left(x_{C}F_{C}\right)^{y_{C}}e^{-x_{C}F_{C}}}{y_{C}!}\right)\left(\frac{\left(x_{D}F_{D}\right)^{y_{D}}e^{-x_{D}F_{D}}}{y_{D}!}\right)}{\sum_{z_{C}+z_{D}=N}\left(\frac{\left(x_{C}F_{C}\right)^{z_{C}}e^{-x_{C}F_{C}}}{z_{C}!}\right)\left(\frac{\left(x_{D}F_{D}\right)^{z_{D}}e^{-x_{D}F_{D}}}{z_{D}!}\right)} \nonumber \\
&= \frac{\left(\frac{\left(x_{C}F_{C}\right)^{y_{C}}e^{-x_{C}F_{C}}}{y_{C}!}\right)\left(\frac{\left(x_{D}F_{D}\right)^{y_{D}}e^{-x_{D}F_{D}}}{y_{D}!}\right)}{\frac{1}{N!}\sum_{z_{C}+z_{D}=N}N!\left(\frac{\left(x_{C}F_{C}\right)^{z_{C}}e^{-x_{C}F_{C}}}{z_{C}!}\right)\left(\frac{\left(x_{D}F_{D}\right)^{z_{D}}e^{-x_{D}F_{D}}}{z_{D}!}\right)} \nonumber \\
&= \frac{\left(\frac{\left(x_{C}F_{C}\right)^{y_{C}}e^{-x_{C}F_{C}}}{y_{C}!}\right)\left(\frac{\left(x_{D}F_{D}\right)^{y_{D}}e^{-x_{D}F_{D}}}{y_{D}!}\right)}{\frac{1}{N!}\left(x_{C}F_{C}+x_{D}F_{D}\right)^{N}e^{-\left(x_{C}F_{C}+x_{D}F_{D}\right)}} \nonumber \\
&= \frac{N!}{y_{C}!y_{D}!}\left(\frac{x_{C}F_{C}}{x_{C}F_{C}+x_{D}F_{D}}\right)^{y_{C}}\left(\frac{x_{D}F_{D}}{x_{C}F_{C}+x_{D}F_{D}}\right)^{y_{D}} \nonumber \\
&= \mathbf{P}_{\textrm{WF}}(y_{C} \mid x_{C}) ,
\end{align}
\end{linenomath}
recovering the classical transition rule based on binomial sampling \citep{haccou:CUP:2005}. Therefore, the update rule defined by \eq{updateRule} may be thought of as a generalization of the classical, frequency-dependent Wright-Fisher process.
\end{remark}

\subsection{Reproductive capacities and selection}
In the absence of selection, each player in a population of size $N$ has a reproductive capacity determined by a baseline reproductive capacity, $f_{N}$. We consider the following two functional forms for $f_{N}$, examples of which are depicted in \fig{baselineRates} in the main text.

\subsubsection{Rectified linear}
One natural way to model reproductive capacity is as a linear function of the population size, $N$. In this case, we can write $f_{N} =\max\left\{ 0 , c_{K}+r\left(1-\frac{N}{K}\right) \right\}$ for some parameters $c_{K}$, $r$, and $K$. We refer to $f_{N}$ as a ``rectified'' linear reproductive capacity since it piecewise-linear with the constraint $f_{N}\geqslant 0$ for every $N$. Note that $f_{N^{\ast}}=1$ when $N^{\ast}=K\left(1-\frac{1}{r}\left(1-c_{K}\right)\right)$. Therefore, $N^{\ast}$ may be interpreted as the (neutral) carrying capacity of the population since when $f_{N^{\ast}}=1$ each individual is replaced by one offspring on average. Note that $K$ itself is not necessarily the neutral carrying capacity for this form of $f_{N}$.

\subsubsection{Threshold-constant}
If the reproductive capacity is constant, then every player expects to produce $1+r$ offspring that survive into the next generation, where $r\geqslant -1$. We assume that this growth is eventually bounded by environmental constraints, so we set $f_{N} =\left(1+r\right)\min\left\{1,K/N\right\}$ for some $K$. We refer to $f_{N}$ as a ``threshold-constant'' reproductive capacity since it is constant up to a threshold ($N=K$) and then decreasing to $0$ beyond $K$. When $r<0$, there is no solution to $f_{N}=1$ since $f_{N}\leqslant 1+r<1$ for each $N$. When $r>0$, we have $f_{\left(1+r\right) K}=1$, so $N^{\ast}=\left(1+r\right) K$ is the neutral carrying capacity of the population.

\subsubsection{Selection}
In a game with strategies $C$ and $D$, let $\pi_{C}\left(x_{C},y_{C}\right)$ and $\pi_{D}\left(x_{C},x_{D}\right)$ be the total payoffs to $C$ and $D$, respectively, when there are $x_{C}$ cooperators and $x_{D}$ defectors. If the population size is fixed, then a payoff of $\pi$ is typically converted to a fitness of $f$ by defining $f=1+w\pi$, where $w$ is the ``selection strength'' \citep[see][]{antal:JTB:2009,antal:PNAS:2009}. This perturbation approach has even been extended to asymmetric games played between different populations \citep{veller:JET:2016}. While our setup is somewhat different, we maintain this convention of using payoffs from a game to perturb reproductive capacities. In particular, if $w$ is a parameter representing the intensity of selection, then the reproductive capacities of cooperators and defectors are given by
\begin{linenomath}
\begin{subequations}
\begin{align}\label{eq:rateFunction}
F_{C}\left(x_{C},x_{D}\right) &:= \Big( 1 + w \pi_{C}\left(x_{C},x_{D}\right) \Big) f_{x_{C}+x_{D}} ; \\
F_{D}\left(x_{C},x_{D}\right) &:= \Big( 1 + w \pi_{D}\left(x_{C},x_{D}\right) \Big) f_{x_{C}+x_{D}} ,
\end{align}
\end{subequations}
\end{linenomath}
respectively. In other words, the baseline reproductive capacity, $f_{x_{C}+x_{D}}$, is perturbed by the game according to the strength of selection, $w$. In order to maintain non-negative reproductive capacities, $w$ must be sufficiently small. In the next section, we consider a public goods game in which $w$ has a clear biological interpretation.

\section{Dynamics of the public goods game}\label{sec:appendixB}
In the public goods game, a cooperator sacrifices a fraction, $w$, of his or her baseline reproductive capacity in order to contribute to a public good. This contribution is enhanced by a factor of $R>1$ and then distributed evenly among all of the players in the population. In terms of the payoff function in \eq{rateFunction}, we have $\pi_{C}\left(x_{C},x_{D}\right) =R\left(\frac{x_{C}}{x_{C}+x_{D}}\right) - 1$ and $\pi_{D}\left(x_{C},x_{D}\right) =R\left(\frac{x_{C}}{x_{C}+x_{D}}\right)$ as well as \eq{cooperatorDefectorRate} in the main text.

\subsection{Metastable equilibria}
Consider a population evolving according to the update rule of \eq{wfbUpdateRuleMutation}. As noted in the main text, such a branching process either grows without bound or eventually goes extinct. Even when the population has an extinction probability of $1$, there can be so-called ``metastable'' states (or ``equilibria'') around which the population fluctuates for many generations. While a quasi-stationary distribution for the process describes the distribution of strategy abundances prior to extinction, a metastable equilibrium describes the mean(s) around which these strategy counts fluctuate. We are interested in when these metastable equilibria exist and how they are influenced by the parameters of the model.

Let $\mathbf{E}_{\left(x_{C},x_{D}\right)}\left[y_{C}\right]$ (resp. $\mathbf{E}_{\left(x_{C},x_{D}\right)}\left[y_{D}\right]$) be the expected abundance of cooperators (resp. defectors) in the next generation given $x_{C}$ cooperators and $x_{D}$ defectors in the current generation. Formally, a metastable equilibrium for this process is a state at which $\mathbf{E}_{\left(x_{C}^{\ast},x_{D}^{\ast}\right)}\left[y_{C}\right] =x_{C}^{\ast}$ and $\mathbf{E}_{\left(x_{C}^{\ast},x_{D}^{\ast}\right)}\left[y_{D}\right] =x_{D}^{\ast}$. That is, cooperator and defector abundances each remain unchanged (on average) at a metastable equilibrium. We use the term ``metastable'' because the population fluctuates around this state but eventually goes extinct. We discuss extinction time in \ref{sec:extinctionTime}. First, we derive the metastable equilibria for public goods games.

\subsubsection{Derivation of metastable equilibria}
Let $u$ be the strategy-mutation rate. With probability $1-u$, an offspring acquires the strategy of the parent. With probability $u$, the offspring takes on one of $C$ and $D$ uniformly at random. In state $\left(x_{C},x_{D}\right)$, the expected number of cooperators in the next generation is
\begin{linenomath}
\begin{align}
\mathbf{E}_{\left(x_{C},x_{D}\right)}\left[y_{C}\right] &= \sum_{\left(y_{C},y_{D}\right)}\mathbf{P}_{\textrm{WFB}}(y_{C},y_{D} \mid x_{C},x_{D})\left(\left(1-\frac{u}{2}\right)y_{C}+\left(\frac{u}{2}\right)y_{D}\right) \nonumber \\
&= \sum_{\left(y_{C},y_{D}\right)}\left(\frac{\left(x_{C}F_{C}\right)^{y_{C}}e^{-x_{C}F_{C}}}{y_{C}!}\right)\left(\frac{\left(x_{D}F_{D}\right)^{y_{D}}e^{-x_{D}F_{D}}}{y_{D}!}\right)\left(\left(1-\frac{u}{2}\right)y_{C}+\left(\frac{u}{2}\right)y_{D}\right) \nonumber \\
&= \left(1-\frac{u}{2}\right)x_{C}F_{C} + \left(\frac{u}{2}\right)x_{D}F_{D} .
\end{align}
\end{linenomath}
Similarly, the expected number of defectors in the next generation is $\left(\frac{u}{2}\right)x_{C}F_{C} + \left(1-\frac{u}{2}\right)x_{D}F_{D}$. Therefore, the system of equations we need to solve in order to find a metastable equilibrium is
\begin{linenomath}
\begin{subequations}\label{eq:alphaCalphaDfCfD}
\begin{align}
x_{C} &= \left(1-\frac{u}{2}\right)x_{C}F_{C} + \left(\frac{u}{2}\right)x_{D}F_{D} ; \label{eq:alphaCfCfD} \\
x_{D} &= \left(\frac{u}{2}\right)x_{C}F_{C} + \left(1-\frac{u}{2}\right)x_{D}F_{D} . \label{eq:alphaDfCfD}
\end{align}
\end{subequations}
\end{linenomath}
In other words, it must be true that
\begin{linenomath}
\begin{subequations}\label{eq:alphaCalphaD}
\begin{align}
x_{C} &= \left[\left(1-\frac{u}{2}\right)x_{C}\Big(1 +w \pi_{C}\left(x_{C},x_{D}\right)\Big) + \left(\frac{u}{2}\right)x_{D}\Big(1 +w \pi_{D}\left(x_{C},x_{D}\right)\Big)\right] f_{x_{C}+x_{D}} ; \label{eq:alphaC} \\
x_{D} &= \left[\left(\frac{u}{2}\right)x_{C}\Big(1 +w \pi_{C}\left(x_{C},x_{D}\right)\Big) + \left(1-\frac{u}{2}\right)x_{D}\Big(1 +w \pi_{D}\left(x_{C},x_{D}\right)\Big)\right] f_{x_{C}+x_{D}} . \label{eq:alphaD}
\end{align}
\end{subequations}
\end{linenomath}
These equations are trivially satisfied when $x_{C}=x_{D}=0$ (population extinction). There can also be solutions to \eq{alphaCalphaD} with $x_{C}\neq 0$ or $x_{D}\neq 0$; we give a condition for the existence of non-zero solutions below.

\begin{remark}
If $u =0$, then \eq{alphaCalphaD} reduces to the system
\begin{linenomath}
\begin{subequations}
\begin{align}
x_{C} &= \Big(1 + w \pi_{C}\left(x_{C},x_{D}\right)\Big) x_{C} f_{x_{C}+x_{D}} ; \\
x_{D} &= \Big(1 + w \pi_{D}\left(x_{C},x_{D}\right)\Big) x_{D} f_{x_{C}+x_{D}} .
\end{align}
\end{subequations}
\end{linenomath}
If $x_{C}$ and $x_{D}$ satisfy this system and $x_{C},x_{D}\neq 0$, then
\begin{linenomath}
\begin{align}
\Big(1 + w \pi_{C}\left(x_{C},x_{D}\right)\Big) f_{x_{C}+x_{D}} = \Big(1 + w \pi_{D}\left(x_{C},x_{D}\right)\Big) f_{x_{C}+x_{D}} = 1 .
\end{align}
\end{linenomath}
Therefore, either $w=0$ and $f_{x_{C}+x_{D}} =1$ or $w\neq 0$ and $\pi_{C}\left(x_{C},x_{D}\right) =\pi_{D}\left(x_{C},x_{D}\right)$. However, for the public goods game, it is always the case that $\pi_{C}\left(x_{C},x_{D}\right) <\pi_{D}\left(x_{C},x_{D}\right)$ when $x_{C}>0$, so it must be true that $w=0$ and $f_{x_{C}+x_{D}} =1$. Thus, if $u =0$ and $w>0$, then any solution satisfies $x_{C}=0$ or $x_{D}=0$. In other words, in the absence of mutation, selection forces the extinction of at least one strategy.

Similarly, if $u>0$ and there is a solution $\left(x_{C},x_{D}\right)$ with $x_{C}=0$, then one obtains $x_{D}F_{D}=0$ from \eq{alphaCfCfD} and $x_{D}=0$ from \eq{alphaDfCfD}. For a similar reason, if there is solution with $x_{D}=0$, then it must also be true that $x_{C}=0$. Thus, if $\left(x_{C},x_{D}\right)$ is a non-zero solution to \eq{alphaCalphaD} when $u>0$, then $x_{C}>0$ and $x_{D}>0$.
\end{remark}

\begin{lemma}\label{lem:cooperatorFraction}
If $u>0$, then there exists $p\in\left(0,1\right)$ such that any non-zero solution to \eq{alphaCalphaD}, $\left(x_{C},x_{D}\right)$, satisfies $\frac{x_{C}}{x_{C}+x_{D}} = p$. In other words, the fraction of cooperators is the same in any solution to \eq{alphaCalphaD}.
\end{lemma}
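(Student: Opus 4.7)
The plan is to eliminate the unknown baseline reproductive capacity $f_{x_C+x_D}$ from the pair \eq{alphaC}--\eq{alphaD} by taking the sum and difference of the two equations, thereby reducing the system to a single polynomial equation in the cooperator fraction $x := x_C/(x_C+x_D)$ alone. The preceding remark already guarantees that when $u > 0$ any non-zero solution has both $x_C > 0$ and $x_D > 0$, so $N := x_C + x_D > 0$ and $x \in (0,1)$ are well defined.

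First I would add \eq{alphaC} and \eq{alphaD}: the mutation weights combine to $1$, collapsing the right-hand side to $f_N\bigl[x_C(1+w\pi_C) + x_D(1+w\pi_D)\bigr]$. Using the identity $x\pi_C(x) + (1-x)\pi_D(x) = x(R-1)$, which is immediate from $\pi_C(x)=Rx-1$ and $\pi_D(x)=Rx$, the summed equation reduces to $1 = f_N[1 + wx(R-1)]$, expressing $f_N$ as a function of $x$ alone. Next I would subtract \eq{alphaD} from \eq{alphaC}: the mutation weights now combine to $1-u$, giving $x_C - x_D = (1-u)f_N\bigl[x_C(1+w\pi_C) - x_D(1+w\pi_D)\bigr]$. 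Dividing by $N$ and substituting the expression for $f_N$ obtained from the sum eliminates $f_N$ entirely, and expanding yields a quadratic of the form $2w(uR-1)x^2 + [2u + 2w - wu(R+1)]x - u = 0$, whose coefficients depend only on $u$, $w$, and $R$.

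Finally I would argue that this quadratic has exactly one root in $(0,1)$, which must then be the common cooperator fraction $p$. Evaluating at the endpoints gives $-u < 0$ at $x=0$ and $u[1 + w(R-1)] > 0$ at $x=1$, so the intermediate value theorem produces at least one root $p \in (0,1)$. For uniqueness I would split into three cases based on the sign of the leading coefficient $2w(uR-1)$: if it is negative, the parabola opens downward and the endpoint signs force the two real roots to straddle $(0,1)$ asymmetrically, allowing only one crossing in $(0,1)$; if it is positive, the product of roots equals $-u/[2w(uR-1)] < 0$, so one root is negative and the other is the unique root in $(0,1)$; if it is zero, the remaining linear equation has positive slope and a unique root.

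The main obstacle is really just careful bookkeeping rather than any deep difficulty: the cancellation of $f_N$ is forced by the structural property that the mutation matrix has row sums equal to $1$ (yielding the sum relation) and non-trivial eigenvalue $1-u$ (yielding the difference relation), but the uniqueness-in-$(0,1)$ step requires treating all three signs of the leading coefficient separately and checking that the constant and $x=1$ evaluations have opposite signs in each case.
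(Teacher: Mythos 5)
Your proposal is correct and follows essentially the same route as the paper: both eliminate $f_{N}$ by exploiting the degree-zero homogeneity of $\pi_{C},\pi_{D}$ (the paper by taking the ratio $x_{C}/(x_{C}+x_{D})$ directly, you by the sum/difference decomposition), arrive at the identical quadratic $2(1-uR)wp^{2}+(uw(R+1)-2u-2w)p+u=0$ up to an overall sign, and conclude uniqueness in $(0,1)$ from the opposite signs at the endpoints. Your three-way case analysis on the leading coefficient is sound but unnecessary, since a polynomial of degree at most two with opposite signs at $0$ and $1$ can have only one root in $(0,1)$ regardless of that sign.
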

\begin{proof}
If $\left(x_{C},x_{D}\right)$ is a solution to \eq{alphaCalphaD} with $x_{C},x_{D}\geqslant 0$ and $x_{C}+x_{D}>0$, then, with $p:=\frac{x_{C}}{x_{C}+x_{D}}$,
\begin{linenomath}
\begin{align}
p &= \frac{x_{C}}{x_{C}+x_{D}} \nonumber \\
&= \frac{\left(1-\frac{u}{2}\right)x_{C}\Big(1 +w \pi_{C}\left(x_{C},x_{D}\right)\Big) + \left(\frac{u}{2}\right)x_{D}\Big(1 +w \pi_{D}\left(x_{C},x_{D}\right)\Big)}{x_{C}\Big(1 +w \pi_{C}\left(x_{C},x_{D}\right)\Big) + x_{D}\Big(1 +w \pi_{D}\left(x_{C},x_{D}\right)\Big)} \nonumber \\
&= \frac{\left(1-\frac{u}{2}\right)\left(\frac{x_{C}}{x_{C}+x_{D}}\right)\Big(1 +w \pi_{C}\left(\frac{x_{C}}{x_{C}+x_{D}},\frac{x_{D}}{x_{C}+x_{D}}\right)\Big) + \left(\frac{u}{2}\right)\left(\frac{x_{D}}{x_{C}+x_{D}}\right)\Big(1 +w \pi_{D}\left(\frac{x_{C}}{x_{C}+x_{D}},\frac{x_{D}}{x_{C}+x_{D}}\right)\Big)}{\left(\frac{x_{C}}{x_{C}+x_{D}}\right)\Big(1 +w \pi_{C}\left(\frac{x_{C}}{x_{C}+x_{D}},\frac{x_{D}}{x_{C}+x_{D}}\right)\Big) + \left(\frac{x_{D}}{x_{C}+x_{D}}\right)\Big(1 +w \pi_{D}\left(\frac{x_{C}}{x_{C}+x_{D}},\frac{x_{D}}{x_{C}+x_{D}}\right)\Big)} \nonumber \\
&= \frac{\left(1-\frac{u}{2}\right) p\Big(1 +w \pi_{C}\left(p,1-p\right)\Big) + \left(\frac{u}{2}\right)\left(1-p\right)\Big(1 +w \pi_{D}\left(p,1-p\right)\Big)}{p\Big(1 +w \pi_{C}\left(p,1-p\right)\Big) + \left(1-p\right)\Big(1 +w \pi_{D}\left(p,1-p\right)\Big)} . \label{eq:fractionEquation}
\end{align}
\end{linenomath}
Since $\pi_{C}\left(x_{C},x_{D}\right) =R\left(\frac{x_{C}}{x_{C}+x_{D}}\right) - 1$ and $\pi_{D}\left(x_{C},x_{D}\right) =R\left(\frac{x_{C}}{x_{C}+x_{D}}\right)$, \eq{fractionEquation} is equivalent to
\begin{linenomath}
\begin{align}\label{eq:quadraticPolynomial}
\varphi\left(p\right) &:= 2\left(1-u R\right) wp^{2} + \left(u w\left(R+1\right) -2u -2w\right) p + u = 0 .
\end{align}
\end{linenomath}
Since $\varphi\left(p\right)$ is (at most) quadratic, $\varphi\left(0\right) =u$, and $\varphi\left(1\right) =-u\left(1-w+Rw\right)$, we see that if $u >0$, then there is a unique solution to \eq{fractionEquation} that falls within $\left[0,1\right]$, and, furthermore, this solution is in $\left(0,1\right)$. Explicitly,
\begin{linenomath}
\begin{align}\label{eq:equilibriumFraction}
p &= \frac{u w\left(R+1\right) -2u -2w +\sqrt{u^{2}w^{2}R^{2}+2u^{2}w^{2}R+4u^{2}wR-4u w^{2}R+u^{2}w^{2}-4u^{2}w+4u^{2}-4u w^{2}+4w^{2}}}{4\left(u R -1\right) w}
\end{align}
\end{linenomath}
if $R\neq 1/u$ and $w\neq 0$, and $p=u /\left(2u +\left(1-u\right) w\right)$ if $R=1/u$ or $w=0$, which completes the proof.
\end{proof}

From the proof of Lemma \ref{lem:cooperatorFraction}, we see that if $u =0$, then either (i) $w=0$ and every $p\in\left[0,1\right]$ is a solution to \eq{fractionEquation} or (ii) $w>0$ and the only solutions to \eq{fractionEquation} are $p=0$ and $p=1$. For any $u\in\left[0,1\right]$, the fraction of cooperators in a non-zero metastable equilibrium is independent of the baseline reproductive capacity, $f_{N}$. However, the existence of a metastable equilibrium and the size of the population at such an equilibrium both depend on the baseline reproductive capacity. Suppose that $x_{C}=pN$ and $x_{D}=\left(1-p\right) N$ satisfy \eq{alphaCalphaD}, where $p\in\left[0,1\right]$ is a fraction of cooperators that satisfies \eq{fractionEquation}. From \eq{alphaCalphaD},
\begin{linenomath}
\begin{subequations}\label{eq:xCondition}
\begin{align}
p &= \left[\left(1-\frac{u}{2}\right) p\Big(1 +w \pi_{C}\left(p,1-p\right)\Big) + \left(\frac{u}{2}\right)\left(1-p\right)\Big(1 +w \pi_{D}\left(p,1-p\right)\Big)\right] f_{N} ; \\
1-p &= \left[\left(\frac{u}{2}\right) p\Big(1 +w \pi_{C}\left(p,1-p\right)\Big) + \left(1-\frac{u}{2}\right)\left(1-p\right)\Big(1 +w \pi_{D}\left(p,1-p\right)\Big)\right] f_{N} ,
\end{align}
\end{subequations}
\end{linenomath}
which, in turn, holds if and only if the total population size, $N$ satisfies
\begin{linenomath}
\begin{align}\label{eq:baselineEquation}
f_{N} &= \frac{1}{1+w\left(R-1\right) p} .
\end{align}
\end{linenomath}
The right-hand-side of \eq{baselineEquation} is independent of $f_{N}$, and once this quantity is calculated, it is straightforward to check for any $f_{N}$ whether there exists $N$ for which \eq{baselineEquation} holds. If $f_{N}$ is strictly monotonic, then there exists at most one $N$ that satisfies this equation. For other types of baseline reproductive capacities, there might be several such $N$ that satisfy \eq{baselineEquation} (resulting in several non-zero metastable equilibria).

In addition to the simulations described in the main text, \textbf{Figs. \ref{fig:cooperatorsThenDefectors}--\ref{fig:defectorResistance}} demonstrate further effects of model parameters on metastable equilibria.

\begin{figure}
\centering
\includegraphics[width=0.8\textwidth]{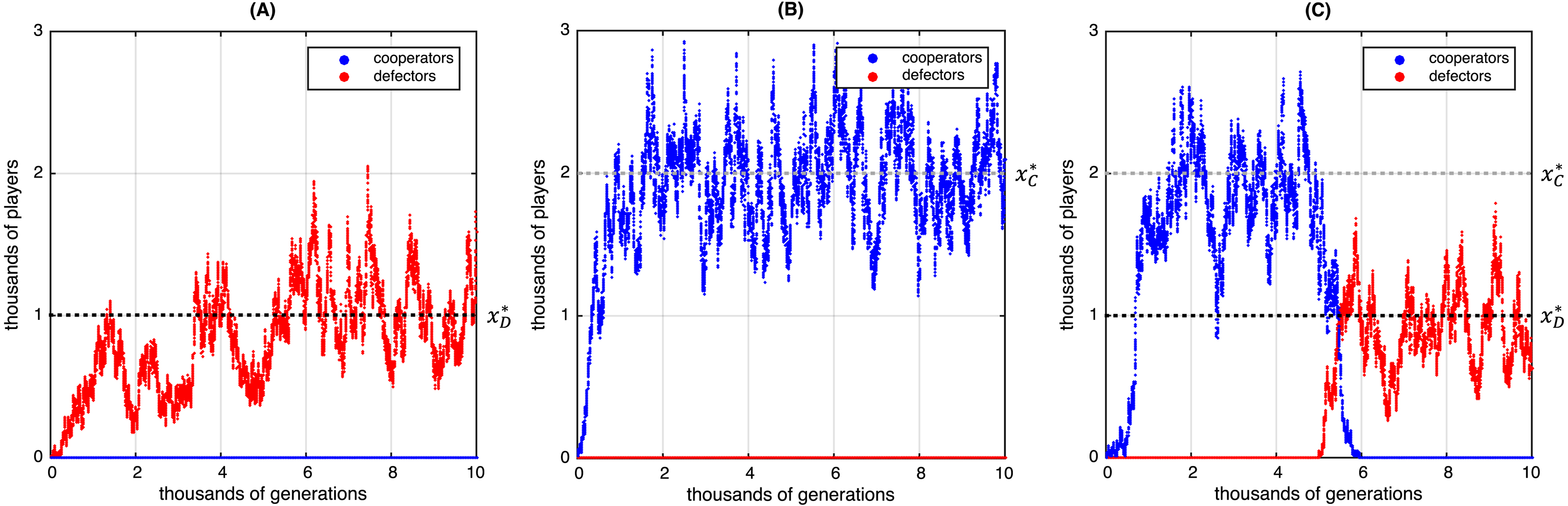}
\caption{Evolutionary game dynamics when there is no mutation and the baseline reproductive capacity, $f_{N}$, is defined by $f_{N} =\max\left\{0,c_{K}+r\left(1-\frac{N}{K}\right)\right\}$, where $c_{K}=1$, $r=0.005$, and $K=1000$. The multiplication factor for the public goods game is $R=1.5$ and the cost of cooperation is $w=0.01$. In (A), the population is initialized with no cooperators and 10 defectors. The defectors grow until they reach their carrying capacity of $1000$ and then persist at this metastable equilibrium ($f_{1}\geqslant 1$). In (B), the population is initialized with 10 cooperators and no defectors. The cooperators then grow in abundance until they reach their carrying capacity of approximately $2000$ players. It is immediate from panels (A) and (B) that groups of cooperators perform better than groups of defectors since selection allows them to reach a higher carrying capacity. In (C), the population is initialized with 10 cooperators and no defectors, and the population then proceeds to reach its carrying capacity. After $5000$ generations, an additional 10 defectors are introduced into the population, which disrupts the metastable equilibrium reached by the all-cooperator population. Defectors then outcompete and replace cooperators and finally reach their carrying capacity, which, as in (A), is significantly lower than the carrying capacity of an all-cooperator population.\label{fig:cooperatorsThenDefectors}}
\end{figure}

\begin{figure}
\centering
\includegraphics[width=0.8\textwidth]{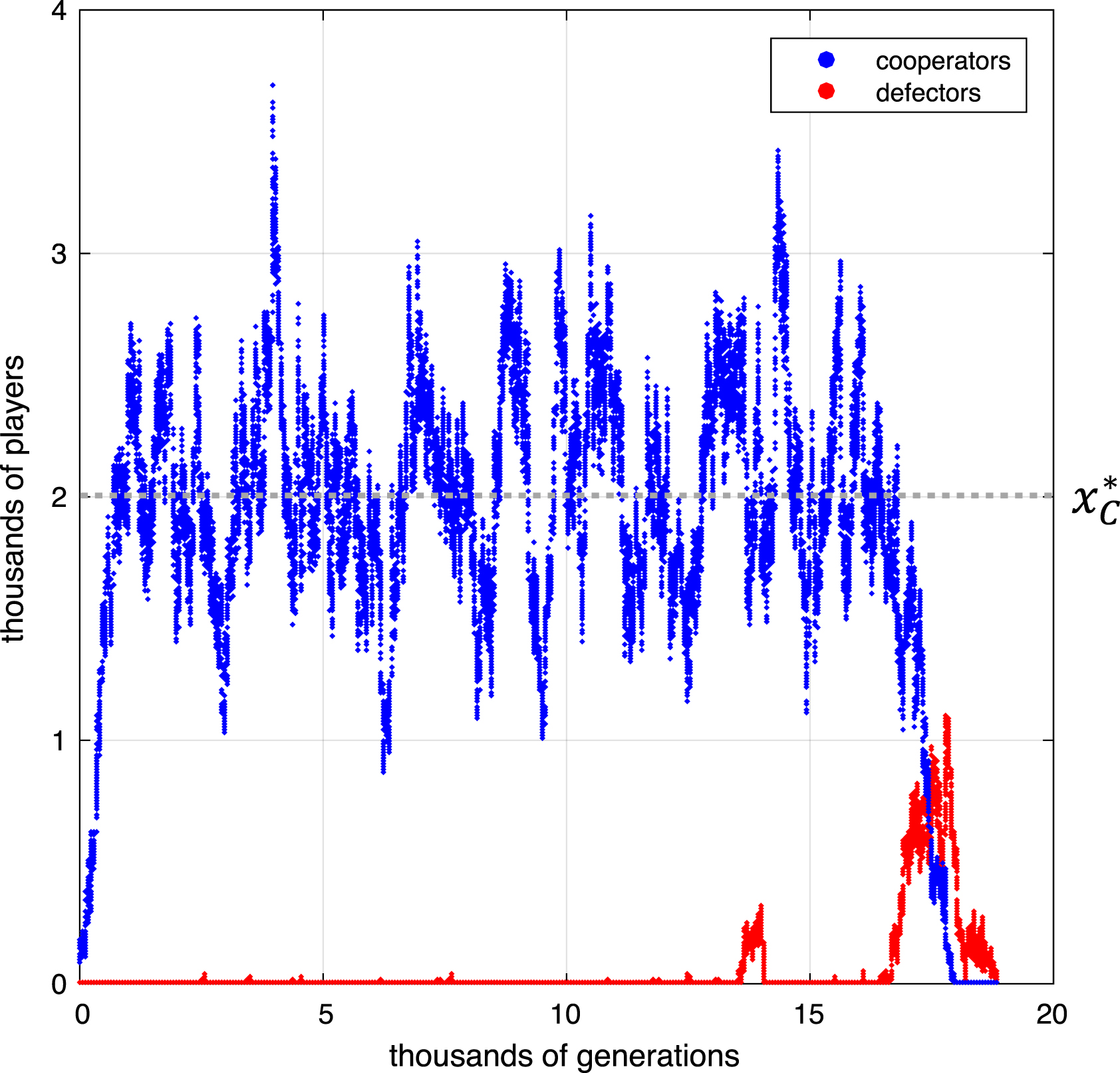}
\caption{Mutation-selection dynamics with drastically reduced mutation rates. The parameters used here are $u =0.00001$, $w=0.005$, and $R=2.0$. The baseline reproductive capacity is $f_{N} =\max\left\{0,c_{K}+r\left(1-\frac{N}{K}\right)\right\}$, where $c_{K}=0.995$, $r=0.005$, and $K=2000$. The population is initialized with $100$ cooperators and no defectors, and the cooperators then grow to reach their carrying capacity. Although small clusters of defectors are occasionally introduced through mutation, cooperators can resist invasion for a short period of time. Eventually, defectors invade and replace cooperators, and the population goes extinct due to the fact that defectors cannot survive on their own ($f_{1}<1$).\label{fig:defectorResistance}}
\end{figure}

\begin{lemma}
If $u\in\left(0,1\right)$ and $w\neq 0$, then $p$ is a strictly increasing function of $R$ with $p\uparrow 1/2$ as $R\rightarrow\infty$.
\end{lemma}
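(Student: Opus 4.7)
I would view $p = p(R)$ as an implicit function defined by the polynomial $\varphi(p,R) = 2(1-uR)wp^{2} + \bigl(uw(R+1) - 2u - 2w\bigr)p + u = 0$ from \eq{quadraticPolynomial}, and read off both the monotonicity and the asymptotic from the identity $\varphi(p(R),R) \equiv 0$. I will tacitly assume $w > 0$, in keeping with the interpretation of $w$ as a cost of cooperation.

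First, I would pin down the location of the root. Evaluating at the endpoints gives $\varphi(0) = u > 0$ and $\varphi(1/2) = w(u-1)/2 < 0$ (using $u \in (0,1)$ and $w > 0$), so the unique root in $[0,1]$ supplied by Lemma \ref{lem:cooperatorFraction} lies strictly in $(0, 1/2)$ for every admissible $R > 1$. In particular, $p(R) < 1/2$ for all $R$.

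Next, implicit differentiation gives $p'(R) = -\partial_{R}\varphi/\partial_{p}\varphi$ evaluated at the root. A direct computation yields $\partial_{R}\varphi = uwp(1-2p)$, which is strictly positive since $p \in (0,1/2)$ and $u,w > 0$. For $\partial_{p}\varphi$ at the root, I would argue that $\varphi(\cdot,R)$ is a polynomial of degree at most two that strictly crosses from $\varphi(0) > 0$ to $\varphi(1/2) < 0$ through a unique zero $p^{\ast}$; if $\partial_{p}\varphi(p^{\ast}) = 0$ then $p^{\ast}$ would be a multiple root of $\varphi$, which would force $\varphi$ to have constant sign on $\mathbb{R}$ and contradict the strict sign change. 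Hence $\partial_{p}\varphi(p^{\ast}) < 0$, and so $p'(R) > 0$. This compact multiplicity argument sidesteps a case split on the sign of the leading coefficient $2(1-uR)w$, which flips at $R = 1/u$; handling that sign flip cleanly is the subtlety I expect to be the main obstacle in a less careful approach.

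Finally, for the asymptotic behavior, the monotone bounded function $R \mapsto p(R)$ has a limit $L \in (0,1/2]$ as $R \to \infty$, with strict positivity $L > 0$ because $L \geq p(R_{0}) > 0$ for any fixed $R_{0} > 1$. Dividing $\varphi(p,R) = 0$ by $R$ and sending $R \to \infty$ kills the $O(1)$ and $O(1/R)$ contributions and leaves $uwL(1-2L) = 0$; since $u, w, L > 0$, this forces $L = 1/2$, i.e. $p \uparrow 1/2$ as claimed.
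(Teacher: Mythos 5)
Your proposal is correct, and it reaches both conclusions by a route that differs in execution, though not in its key computation, from the paper's. The crucial quantity in both arguments is the same: the $R$-variation of $\varphi$ at a point of $\left(0,1/2\right)$ is governed by $uwp\left(1-2p\right)>0$ (the paper isolates this by rewriting $\varphi_{R}\left(p\right)=2\left(uR-1\right)wp\left(\tfrac{1}{2}-p\right)-\left(\left(1-u\right)w+2u\right)p+u$, so that only the coefficient of $p\left(\tfrac{1}{2}-p\right)$ depends on $R$; you obtain it as $\partial_{R}\varphi$). Where you diverge is in how monotonicity is extracted: the paper uses a calculus-free two-point comparison --- if $R_{1}<R_{2}$ and $\varphi_{R_{1}}\left(p_{1}\right)=0$ then $\varphi_{R_{2}}\left(p_{1}\right)>0$, which together with $\varphi_{R_{2}}\left(1/2\right)<0$ and uniqueness of the root forces $p_{2}>p_{1}$ --- whereas you differentiate implicitly, which obliges you to supply the additional multiplicity argument showing $\partial_{p}\varphi\left(p^{\ast}\right)<0$. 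That argument is sound (a simultaneous zero of $\varphi$ and $\partial_{p}\varphi$ for a polynomial of degree at most two would force constant sign, contradicting the strict sign change on $\left[0,1/2\right]$), and it does, as you note, cleanly absorb the sign flip of the leading coefficient at $R=1/u$, which the paper's comparison argument also never has to confront. For the limit, the paper simply takes $R\rightarrow\infty$ in the explicit radical formula \eq{equilibriumFraction}; your alternative --- dividing $\varphi\left(p\left(R\right),R\right)=0$ by $R$ and passing to the limit to get $uwL\left(1-2L\right)=0$ --- is arguably cleaner since it avoids manipulating the closed form, and it correctly uses monotonicity plus boundedness to guarantee the limit exists and is positive. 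Your tacit restriction to $w>0$ matches what the paper's own proof implicitly requires (its claim $\varphi\left(1/2\right)<0$ also needs $w>0$), so this is not a gap relative to the paper.
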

\begin{proof}
Since the polynomial defined by \eq{quadraticPolynomial} satisfies $\varphi\left(0\right) =u$ and $\varphi\left(1/2\right) =-w\left(1-u\right) /2$, we see that the solution to $\varphi\left(p\right) =0$ that falls within $\left[0,1\right]$ is actually at most $1/2$. Moreover, we can write
\begin{linenomath}
\begin{align}
\varphi_{R}\left(p\right) &:= \varphi\left(p\right) = 2\left(u R-1\right) w p\left(\frac{1}{2}-p\right) - \Big( \left(1-u\right) w+2u \Big) p + u ,
\end{align}
\end{linenomath}
where, notably, only the coefficient of $p\left(\frac{1}{2}-p\right)$ depends on $R$. Thus, if $R_{1}<R_{2}$ and $p_{1}$ satisfies $\varphi_{R_{1}}\left(p_{1}\right) =0$, then $\varphi_{R_{2}}\left(p_{1}\right) >0$. Since $\varphi\left(1/2\right) <0$, the unique solution to $\varphi_{R_{2}}\left(p_{2}\right) =0$ satisfies $p_{2}>p_{1}$, thus $p$ is an increasing function of $R$. That $p\uparrow 1/2$ as $R\rightarrow\infty$ follows immediately from taking the limit of \eq{equilibriumFraction}.
\end{proof}

\begin{theorem}\label{thm:criticalR}
Suppose that $f_{N}\downarrow 0$ as $N\rightarrow\infty$. If $u\neq 0$ and $w\neq 0$, then, for each $N\geqslant 1$, there is a critical multiplication factor, $R_{N}^{\ast}\geqslant 1$, which is the minimum multiplication factor for which there exists a non-zero metastable equilibrium supporting a population size of at least $N$ whenever $R\geqslant R_{N}^{\ast}$.
\end{theorem}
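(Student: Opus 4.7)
The plan is to invoke \eq{baselineEquation}, according to which a non-zero metastable equilibrium with total population size $M$ corresponds to a solution of the scalar equation $f_M = 1/(1+w(R-1)p(R))$, where the equilibrium cooperator fraction $p=p(R)\in(0,1/2]$ is uniquely determined by Lemma \ref{lem:cooperatorFraction}. The equilibrium itself is $\left(x_C^{\ast},x_D^{\ast}\right)=\left(p(R)M,(1-p(R))M\right)$, so ``non-zero metastable equilibrium supporting a population size of at least $N$'' reduces to the scalar question of whether this equation admits some $M\geqslant N$. It will suffice to show that the right-hand side decreases continuously from $1$ down to $0$ as $R$ grows, and then combine this with monotonicity of $f$ to read off the critical $R_N^{\ast}$.

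Define $h(R):=w(R-1)p(R)$. From the preceding lemma, $p$ is continuous and strictly increasing on $[1,\infty)$ with $p(R)\uparrow 1/2$; since $w>0$, the factor $w(R-1)$ is also continuous and strictly increasing, from $0$ to $\infty$. Hence $h$ is continuous and strictly increasing on $[1,\infty)$ with $h(1)=0$ and $h(R)\to\infty$ as $R\to\infty$. Equivalently, $g(R):=1/(1+h(R))$ is continuous and strictly decreasing from $g(1)=1$ to $0$.

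Assume $f_N>0$ (the only non-trivial case; if $f_N=0$, then $f_M=0$ for all $M\geqslant N$ by monotonicity of $f$, and the equation $f_M=g(R)>0$ has no solution with $M\geqslant N$). Set $R_N^{\ast}:=\inf\{R\geqslant 1:g(R)\leqslant f_N\}$. Because $g$ is continuous and $g(R)\downarrow 0$, this infimum is attained and finite, with $R_N^{\ast}\geqslant 1$ (with equality precisely when $f_N\geqslant 1$, since $g(1)=1$). For any $R\geqslant R_N^{\ast}$ we have $g(R)\leqslant f_N$; since $f$ is continuous, non-increasing, and satisfies $f_M\to 0$, the intermediate value theorem supplies some $M\in[N,\infty)$ with $f_M=g(R)$, yielding the required equilibrium of size at least $N$. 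Conversely, for $R<R_N^{\ast}$ we have $g(R)>f_N$, and $f$ non-increasing forces every solution of $f_M=g(R)$ to satisfy $M<N$, so no such equilibrium exists; this confirms that $R_N^{\ast}$ is the minimum as claimed.

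The technical heart of the argument is the monotonicity and limit behavior of $p(R)$, which is exactly what the preceding lemma provides; once $h(R)\to\infty$ is in hand, the theorem follows by a routine application of the intermediate value theorem to $f$. One implicit step worth flagging is that $M$ in \eq{baselineEquation} is a priori an integer, so the IVT argument tacitly uses a continuous extension of $f_N$ to real $N\geqslant 1$; both baseline capacities presented in \ref{sec:appendixA} admit such natural continuous extensions, so this causes no loss of generality.
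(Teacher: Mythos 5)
Your argument is correct and follows essentially the same route as the paper's proof: both reduce the problem via \eq{baselineEquation} to solving $f_{N'}=1/(1+w(R-1)p)$ for some $N'\geqslant N$, use $p\uparrow 1/2$ to conclude that the right-hand side decreases to $0$ in $R$, and define $R_{N}^{\ast}$ as an infimum. You are in fact somewhat more careful on two points the paper glosses over -- you track the joint dependence on $R$ through $h(R)=w(R-1)p(R)$ rather than treating $p$ as fixed when comparing $R$ with $R_{N}^{\ast}$, and you flag that the existence of a solution $N'$ for \emph{every} $R\geqslant R_{N}^{\ast}$ rests on a continuous (or at least intermediate-value) extension of $f_{N}$ -- but these are refinements of the same proof, not a different one.
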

\begin{proof}
Since $p\uparrow 1/2$ as $R\rightarrow\infty$, we see that $\frac{1}{1+w\left(R-1\right) p}\downarrow 0$ as $R\rightarrow\infty$. Let
\begin{linenomath}
\begin{align}
R_{N}^{\ast} &:= \inf\left\{ R \geqslant 1 \mid f_{N'} = \frac{1}{1+w\left(R-1\right) p}\textrm{ for some }N'\geqslant N\right\} .
\end{align}
\end{linenomath}
Since $f_{N}\downarrow 0$ as $N\rightarrow\infty$, we have $R_{N}^{\ast}<\infty$. If $R\geqslant R_{N}^{\ast}$ and $N'$ satisfies $f_{N'} = \frac{1}{1+w\left(R_{N}^{\ast}-1\right) p}$, then $\left(x_{C}^{\ast},x_{D}^{\ast}\right) =\left(pN',\left(1-p\right) N'\right)$ is a metastable equilibrium by \eq{xCondition}. Furthermore, if $R\geqslant R_{N}^{\ast}$, then 
\begin{linenomath}
\begin{align}
\frac{1}{1+w\left(R-1\right) p} \leqslant \frac{1}{1+w\left(R_{N}^{\ast}-1\right) p} ,
\end{align}
\end{linenomath}
and it follows that any solution to $f_{N''} =\frac{1}{1+w\left(R-1\right) p}$ satisfies $N''\geqslant N'\geqslant N$, as desired.
\end{proof}

\begin{remark}
If $\inf_{N\geqslant 1}f_{N} >0$, then \thm{criticalR} need not hold. For example, whenever $R$ is sufficiently large and $w\neq 0$, we have $\frac{1}{1+w\left(R-1\right) p}<\inf_{N\geqslant 1}f_{N}$, so no value of $N$ satisfies \eq{baselineEquation}.
\end{remark}

\subsubsection{Variance}
In state $\left(x_{C},x_{D}\right)$, the expected squared number of cooperators in the next generation is
\begin{linenomath}
\begin{align}\label{eq:expectedSquared}
\mathbf{E}_{\left(x_{C},x_{D}\right)}\left[y_{C}^{2}\right] &= \sum_{\left(y_{C},y_{D}\right)}\mathbf{P}_{\textrm{WFB}}\left( y_{C},y_{D} \mid x_{C},x_{D} \right) m_{2} ,
\end{align}
\end{linenomath}
where
\begin{linenomath}
\begin{align}
m_{2} &= \sum_{a=0}^{y_{C}}\sum_{b=0}^{y_{D}}\left(a+b\right)^{2}\binom{y_{C}}{a}\left(1-\frac{u}{2}\right)^{a}\left(\frac{u}{2}\right)^{y_{C}-a}\binom{y_{D}}{b}\left(1-\frac{u}{2}\right)^{y_{D}-b}\left(\frac{u}{2}\right)^{b} \nonumber \\
&= \left(1-\frac{u}{2}\right)y_{C}\left( 1+\left(1-\frac{u}{2}\right)\left(y_{C}-1\right)\right) + 2\left(\frac{u}{2}\right)\left(1-\frac{u}{2}\right)y_{C}y_{D} + \left(\frac{u}{2}\right)y_{D}\left(1+\left(\frac{u}{2}\right)\left(y_{D}-1\right)\right) .
\end{align}
\end{linenomath}
It follows from a straightforward calculation that
\begin{linenomath}
\begin{align}
\mathbf{E}_{\left(x_{C},x_{D}\right)}\left[y_{C}^{2}\right] &= \left[\left(1-\frac{u}{2}\right)\left(x_{C}F_{C}\right) + \left(\frac{u}{2}\right)\left(x_{D}F_{D}\right)\right] + \left[\left(1-\frac{u}{2}\right)\left(x_{C}F_{C}\right) + \left(\frac{u}{2}\right)\left(x_{D}F_{D}\right)\right]^{2} \nonumber \\
&= \mathbf{E}_{\left(x_{C},x_{D}\right)}\left[y_{C}\right] + \mathbf{E}_{\left(x_{C},x_{D}\right)}\left[y_{C}\right]^{2} .
\end{align}
\end{linenomath}
Therefore, $\textbf{Var}_{\left(x_{C},x_{D}\right)}\left[y_{C}\right] =\mathbf{E}_{\left(x_{C},x_{D}\right)}\left[y_{C}\right]$, and, similarly, $\textbf{Var}_{\left(x_{C},x_{D}\right)}\left[y_{D}\right] =\mathbf{E}_{\left(x_{C},x_{D}\right)}\left[y_{D}\right]$. Thus,
\begin{linenomath}
\begin{subequations}
\begin{align}
\sigma_{\left(x_{C},x_{D}\right)}\left[y_{C}\right] / \mathbf{E}_{\left(x_{C},x_{D}\right)}\left[y_{C}\right] &= \frac{1}{\sqrt{\mathbf{E}_{\left(x_{C},x_{D}\right)}\left[y_{C}\right]}} ; \\
\sigma_{\left(x_{C},x_{D}\right)}\left[y_{D}\right] / \mathbf{E}_{\left(x_{C},x_{D}\right)}\left[y_{D}\right] &= \frac{1}{\sqrt{\mathbf{E}_{\left(x_{C},x_{D}\right)}\left[y_{D}\right]}} ,
\end{align}
\end{subequations}
\end{linenomath}
which both approach $0$ as $\mathbf{E}_{\left(x_{C},x_{D}\right)}\left[y_{C}\right]$ and $\mathbf{E}_{\left(x_{C},x_{D}\right)}\left[y_{D}\right]$ get large.

\section{Extinction time for branching games}\label{sec:extinctionTime}
We now characterize the extinction time for our model, inspired by techniques used in classical branching processes \citep[see][]{jagers:JAP:2011,faure:AAP:2014,schreiber:S:2017}. Let $x=\left(x_{C}/K,x_{D}/K\right)$ be normalized quantities of cooperators and defectors, where $K>0$ parametrizes the baseline reproductive capacity, $f_{N}$. Let
\begin{linenomath}
\begin{align}
A\left(x\right) := \begin{pmatrix}
\left(1-u\right) F_{C}\left(xK\right) & u F_{D}\left(xK\right) \\
u F_{C}\left(xK\right) & \left(1-u\right) F_{D}\left(xK\right)
\end{pmatrix} ,
\end{align}
\end{linenomath}
and consider the map, $\phi$, defined by
\begin{linenomath} 
\begin{align}
\phi &: \mathbb{R}_{\geqslant 0}^{2} \longrightarrow \mathbb{R}_{\geqslant 0}^{2} \nonumber \\
&: x \longmapsto A\left(x\right) x .
\end{align}
\end{linenomath}
Since $F_{C}$ and $F_{D}$ are bounded, so too is $\phi$.

We consider the normalized Markov chain $X_{t}^{K}=\left(C_{t}/K,D_{t}/K\right)$, where $C_{t}$ and $D_{t}$ are the number of cooperators and defectors at time $t$, respectively. Write $p_{K}\left(x,y\right) =\mathbf{P}\left[X_{t+1}^{K}=y \mid X_{t}^{K}=x\right]$ for the transition kernel. The transition probabilities are Poisson-distributed with mean given by the matrix $A\left(x\right)$. Note that $p_{K}\left(\mathbf{0},y\right) =0$ for all $y\neq\mathbf{0}$ since $\mathbf{0}$ is an absorbing state.

A measure $\mu_{K}\in\Delta\left(\mathbb{R}_{\geqslant 0}^{2}\--\mathbf{0}\right)$ is a quasi-stationary distribution for $p_{K}$ if there exists $\lambda_{K}\in\left(0,1\right)$ such that
\begin{linenomath}
\begin{align}
\int_{x\in\mathbb{R}_{\geqslant 0}^{2}\--\mathbf{0}} p_{K}\left(x,E\right) \, d\mu_{K}\left(x\right) = \lambda_{K}\mu_{K}\left(E\right) .
\end{align}
\end{linenomath}
for all $E\subseteq\mathbb{R}_{\geqslant 0}^{2}\--\mathbf{0}$. We denote the extinction time, i.e. the time until the chain is absorbed at the state $\mathbf{0}$, by $\tau_{K}$. Note that, if we start distributed according to a quasi-stationary distribution, $\mu_{K}$, then the probability of being absorbed in the next step is $1-\lambda_{K}$ since
\begin{linenomath}
\begin{align}
\mathbf{P}\left[ X_{t+1}^{K}=\mathbf{0} \mid  X_{t}^{K}\sim \mu_{K} \right] &= \int_{x\in\mathbb{R}_{\geqslant 0}^{2}\--\mathbf{0}} p_{K}\left(x,\mathbf{0}\right) \, d\mu_{K}\left(x\right) \nonumber \\
&= \int_{x\in\mathbb{R}_{\geqslant 0}^{2}\--\mathbf{0}} \Big( 1-p_{K}\left(x,\mathbb{R}_{\geqslant 0}^{2}\--\mathbf{0}\right) \Big) \, d\mu_{K}\left(x\right) \nonumber \\
&= 1-\lambda_{K} .
\end{align}
\end{linenomath}
Moreover, if not absorbed in the next time step, the chain remains distributed according to $\mu_{K}$. Therefore, the extinction time $\tau_{K}$ is a geometric random variable with parameter $1-\lambda_{K}$, and $\mathbf{E}\left[\tau_{K}\right] =1/\left(1-\lambda_{K}\right)$, where $\mathbf{E}\left[\tau_{K}\right]$ denotes the expected value of $\tau_{K}$ when the chain is initially distributed according to $\mu_{K}$.

\begin{proposition}
There exists $c>0$, independent of $K$, such that $\mathbf{E}\left[\tau_{K}\right]\geqslant e^{cK}$.
\end{proposition}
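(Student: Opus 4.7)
The plan is to bound the Perron eigenvalue $\lambda_{K}$ from below by $1-e^{-cK}$; together with the identity $\mathbf{E}[\tau_{K}]=1/(1-\lambda_{K})$ already derived above, this gives the claim. The tool is a Collatz--Wielandt-style bound: if a non-negative function $h\not\equiv 0$ on $\mathbb{R}_{\geq 0}^{2}\setminus\{\mathbf{0}\}$ satisfies $P_{K}h\geq\theta\,h$ pointwise, where $P_{K}$ denotes the sub-Markov transition operator of $X_{t}^{K}$, then $\lambda_{K}\geq\theta$. (Iterating yields $P_{K}^{n}h\geq\theta^{n}h$, hence $\|P_{K}^{n}\|_{\infty}\geq\theta^{n}$; the chain is irreducible on $(\mathbb{N}/K)^{2}\setminus\{\mathbf{0}\}$ when $u>0$, so this spectral radius coincides with the Perron eigenvalue $\lambda_{K}$.) I will take $h=\mathbf{1}_{U}$ for a compact neighborhood $U$ of the normalized metastable equilibrium $x^{\ast}$, reducing the required pointwise inequality to a uniform one-step retention estimate inside $U$.

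The first step is to construct such a $U$: bounded away from $\mathbf{0}$ and satisfying $\phi(U)\subset U^{\circ}$ with a uniform gap $d:=\mathrm{dist}(\phi(U),\partial U)>0$. Such a $U$ exists around the fixed point $x^{\ast}$ of $\phi$, which is the metastable equilibrium characterized by \eq{xCondition} and \eq{baselineEquation}. Local attractivity of $x^{\ast}$ comes from the monotonicity features already used: the mutation term ($u>0$) strictly contracts in the cooperator-fraction direction (Lemma~\ref{lem:cooperatorFraction} singles out a unique $p$), while the non-increasing baseline $f_{N}$ contracts in the total-population-size direction. Concretely, I would take $U$ to be a small closed rectangle $[x_{C}^{\ast}-\eta,x_{C}^{\ast}+\eta]\times[x_{D}^{\ast}-\eta,x_{D}^{\ast}+\eta]$ in normalized coordinates and verify $\phi(U)\subset U^{\circ}$ by Taylor expansion of the explicit expressions for $F_{C}$, $F_{D}$, and $f_{N}$ around $x^{\ast}$.

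The second step is the uniform Poisson concentration estimate
\begin{equation*}
\sup_{x\in U}\mathbf{P}\bigl[\,X_{t+1}^{K}\notin U \,\big|\, X_{t}^{K}=x \,\bigr] \leq e^{-cK}
\end{equation*}
for some $c>0$ independent of $K$. Conditionally on $X_{t}^{K}=x$, by Poisson thinning of the offspring-then-mutation mechanism the coordinates of $K X_{t+1}^{K}$ are independent Poisson random variables with means equal to the coordinates of $K\phi(x)$, both of order $K$ uniformly on $U$. Standard Chernoff bounds (Cram\'er's theorem for Poisson) then give $\mathbf{P}[\lVert X_{t+1}^{K}-\phi(x)\rVert>d/2]\leq 4e^{-cK}$ with $c$ depending only on $d$ and the uniform bounds on $\phi$ over $U$. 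Since $\phi(x)$ lies at distance at least $d$ from $\partial U$, this dominates the one-step escape probability. Combined with $h=\mathbf{1}_{U}$, we obtain $P_{K}h\geq(1-e^{-cK})h$ pointwise (the inequality is trivial outside $U$, where $h=0$), hence $\lambda_{K}\geq 1-e^{-cK}$ and $\mathbf{E}[\tau_{K}]\geq e^{cK}$.

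I expect the most delicate step to be making every constant uniform in $K$. The normalized fixed point $x^{\ast}=(x_{C}^{\ast}/K,x_{D}^{\ast}/K)$ tends to a $K$-independent limit for both classes of $f_{N}$ considered (by \eq{baselineEquation} after substituting $N=x_{C}^{\ast}+x_{D}^{\ast}$ and solving), but one still needs to check that the rectangle width $\eta$, the gap $d$, and the Poisson rate constant $c$ can all be chosen independently of $K$. Once this uniformity is verified, the remainder is a routine application of large-deviation estimates for independent Poisson random variables.
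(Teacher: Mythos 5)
Your proposal is correct and follows essentially the same route as the paper: an attracting neighborhood $U$ of the fixed point $x^{\ast}$ of $\phi$ with a uniform gap, a Chernoff--Cram\'er bound for the Poisson offspring counts giving a one-step escape probability at most $e^{-cK}$ uniformly on $U$, and the identity $\mathbf{E}\left[\tau_{K}\right]=1/\left(1-\lambda_{K}\right)$. The only (cosmetic) difference is that you derive $\lambda_{K}\geqslant\inf_{x\in U}p_{K}\left(x,U\right)$ via a Collatz--Wielandt test-function bound with $h=\mathbf{1}_{U}$, whereas the paper obtains the same inequality by integrating the quasi-stationary eigenvalue equation over $U$.
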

\begin{proof}
For the model considered in the main text, $\phi$ has a unique fixed point, $x^{\ast}$, with $\phi\left(x^{\ast}\right) =x^{\ast}$. Moreover, this fixed point is an attractor. (One can show that the normalized quasi-stationary distribution $\mu_{K}$ converges weakly to $\delta_{x^{\ast}}$ as $K\rightarrow\infty$.) Therefore, there exists $\delta >0$ and an open set, $U$, containing $x^{\ast}$ such that $N_{\delta}\left(\phi\left(\overline{U}\right)\right)\subseteq U$, where for $E\subseteq\mathbb{R}_{\geqslant 0}^{2}-\mathbf{0}$, the $\delta$-neighborhood of $E$ is defined as
\begin{linenomath}
\begin{align}
N_{\delta}\left(E\right) &:= \left\{y\in\mathbb{R}_{\geqslant 0}^{2}-\mathbf{0} \mid \inf_{x\in E}\left\| y-x\right\| <\delta\right\} .
\end{align}
\end{linenomath}
By definition of the quasi-stationary distribution, $\mu_{K}$, we have
\begin{linenomath}
\begin{align}
\lambda_K \mu_{K}\left(U\right) &= \int_{x\in\mathbb{R}_{\geqslant 0}^{2}\--\mathbf{0}} p_K(x,U) \;d\mu_K(x) \nonumber \\
&\geqslant \int_{x\in U} p_{K}\left(x,U\right) \, d\mu_{K}\left(x\right) \nonumber \\
&\geqslant \mu_{K}\left(U\right) \inf_{x\in\overline{U}} p_{K}\left(x,U\right) \nonumber \\
&= \mu_{K}\left(U\right)\left(1-\sup_{x\in\overline{U}} p_{K}\left(x,U^{c}\right) \right) . 
\end{align}
\end{linenomath}
For $x\in\overline{U}$, $\phi\left(x\right)\in \phi\left(\overline{U}\right)$, which implies that $N_{\delta}\left(\phi\left(x\right)\right)\subset N_{\delta}\left(\phi\left(\overline{U}\right)\right)\subset U$. Therefore,
\begin{linenomath}
\begin{align}
\lambda_{K} &\geqslant 1-\sup_{x\in\overline{U}} p_{K}\left(x,U^{c}\right) \nonumber \\
&\geqslant 1-\sup_{x\in\overline{U}} p_{K}\left(x,N_{\delta}\left(\phi\left(x\right)\right)^{c}\right) .
\end{align}
\end{linenomath}
To complete the proof, we bound $p_{K}\left(x,N_{\delta}\left(\phi\left(x\right)\right)^{c}\right)$ via a large-deviation estimate based on the Chernoff-Cramer method. If $Z$ is a Poisson random variable with mean $\phi\left(x\right) K$, then
\begin{linenomath}
\begin{align}
p_{K}\left(x,N_{\delta}\left(\phi\left(x\right)\right)^{c}\right) &= \mathbf{P}\left[X_{t+1}^{K}\not\in N_{\delta}\left(\phi\left(x\right)\right) \mid X_{t}^{K} = x\right] \nonumber \\
&= \mathbf{P}\left[ \left| X_{t+1}^{K}-\phi\left(x\right)\right| > \delta \mid X_{t}^{K} = x\right] \nonumber \\
&= \mathbf{P}\left[ \left| Z-\phi\left(x\right) K\right| > \delta K \right] .
\end{align}
\end{linenomath}
Using Markov's inequality and the Poisson moment-generating function, we see that
\begin{linenomath}
\begin{align}
\mathbf{P}\left[ Z > \left(\phi\left(x\right) +\delta\right) K\right] \leqslant \frac{\mathbf{E}\left[e^{tZ}\right]}{e^{t\left(\phi\left(x\right) +\delta\right)K}} = \frac{e^{\phi\left(x\right) K\left(e^{t}-1\right)}}{e^{t\left(\phi\left(x\right) +\delta\right) K}}.
\end{align}
\end{linenomath}
As a function of $t$, the minimum of $\frac{e^{\phi\left(x\right) K\left(e^{t}-1\right)}}{e^{t\left(\phi\left(x\right) +\delta\right) K}}$ is at $t^{\ast}=\log\left(1+\delta /\phi\left(x\right)\right)$. Since the function $g\left(y\right) :=\log\left(1+\delta /y\right)\left(y+\delta\right) -\delta$ satisfies $g\left(y\right) >0$ and $g'\left(y\right) <0$ for all $y>0$, we have
\begin{linenomath}
\begin{align}
\mathbf{P}\left[ Z > \left(\phi\left(x\right) +\delta\right) K\right] \leqslant e^{-g\left(\phi\left(x\right)\right) K} \leqslant e^{-g\left(m\right) K} ,
\end{align}
\end{linenomath}
where $m=\max_{x\in\mathbb{R}_{\geqslant 0}^{2}}\phi\left(x\right)$. It follows that with $c:=g\left(m\right)$,
\begin{linenomath}
\begin{align}
\mathbf{E}\left[\tau_{K}\right] &= \frac{1}{1-\lambda_{K}} \geqslant e^{cK} ,
\end{align}
\end{linenomath}
which completes the proof.
\end{proof}

\section*{Acknowledgments}
This work was supported by the Office of Naval Research, grant N00014-16-1-2914. C. H. acknowledges financial support from the Natural Sciences and Engineering Research Council of Canada (NSERC), grant RGPIN-2015-05795. The Program for Evolutionary Dynamics is supported, in part, by a gift from B. Wu and Eric Larson.

\end{document}